\documentclass[11pt]{article}

\usepackage{amsmath, color, cite}
\usepackage{amsfonts}
\usepackage{amsthm}
\usepackage{amssymb}
\usepackage{mathrsfs}

   \textwidth=6 true in
   \hoffset=-1.0cm

\newtheorem{theorem}{Theorem}[section]

\newtheorem{lemma}[theorem]{Lemma}

\newtheorem{problem}[theorem]{Problem}
\numberwithin{equation}{section}

\def\Div{\mathop{\rm Div}\nolimits}

\def\bar{\overline}

\def\calA{\mathcal{A}}
\def\calB{\mathcal{B}}

\def\eps{\varepsilon}

\def\le{\leqslant}
\def\ge{\geqslant}

\begin{document}

\title{A frictional contact problem with wear diffusion}

\author{
Piotr Kalita$^{\,1}$, Pawel Szafraniec$^{\, 1}$ and 
Meir Shillor$^{\, 2}$
 \\ ~ \\
{\small $^1$ Jagiellonian University, Faculty of Mathematics and Computer Science} \\
{\small 30348 Krakow, Poland} \\
{\small $^2$ Department of Mathematics and Statistics, Oakland University} \\
{\small Rochester, MI 48309, USA} 
}

\date{\today}

\maketitle

\bibliographystyle{plain}

\begin{abstract}
  This paper constructs and analyzes a model for the dynamic frictional contact between a
  viscoelastic body and a moving foundation. The contact involves wear of the contacting 
  surface and the diffusion of the wear debris. The relationships between the stresses and 
  displacements on the contact boundary are modeled by the normal compliance law and  
  a version of the Coulomb law of dry  friction. The rate of wear of the contact surface is 
  described by the differential form of the Archard law. The effects of the diffusion of the 
  wear particles that cannot leave the contact surface on the surface are taken 
  into account. The novelty of this work is that the contact surface is a manifold and, 
  consequently, the diffusion of the debris takes place on a curved surface.
   The  interest in the model is related to the wear of mechanical joints
  and orthopedic biomechanics where the wear debris are trapped, they diffuse and often
  cause the degradation of the properties of joint prosthesis and various implants. The 
  model is in the form of a differential inclusion for the mechanical contact and the diffusion 
  equation for the wear debris on the contacting surface.
  The existence of a weak solution is proved by using a truncation argument and the 
  Kakutani--Ky Fan--Glicksberg fixed point theorem.
  \end{abstract}

\vskip 5mm \noindent {\bf Keywords:}
  viscoelastic material; Coulomb friction; Archard wear; diffusion on manifold;
  variational inequality

\vskip 5mm \noindent
 {\bf 2010 Mathematics Subject Classification:} 47J20, 47J22, 74M10, 74M15, 70K75

\section{Introduction}
\label{Intro}

This work studies a nonlinear dynamical model for the process of contact 
between a viscoelastic body and a reactive foundation when wear debris is 
generated and diffuses on the contact surface. The model includes subdifferential 
friction boundary condition, and considerably extends the 
model and the results in  \cite{aShillorSofoneaTelega2004a}, which were 
announced in \cite{SST03} and further developed in \cite{gasinski2, GOS-wear-15}. 
Additional information and details can be found 
in \cite{SST04-book}.  The research in \cite{aShillorSofoneaTelega2004a} was 
motivated, in part, by biomechanical applications.
Indeed, such problems arise in artificial joints after arthroplasty (knee, hip, shoulder,
elbow, etc.) where debris is produced by articulating parts of the prosthesis
and is transported to the bone-implant interface. The debris causes the
deterioration of the interface, and is believed to be an important factor
leading to prosthesis loosening (see, e.g.,  \cite{RT01, RTS01}  and references therein).
Thus,  there is a considerable interest in modeling such complex
contact problems arising in  implanted joints.  This pertains to both
cement-less (the so-called "press-fit") and cemented implants.

We present a mathematical model for the dynamics of such problems. The  contact
process is assumed to include friction and wear between a viscoelastic 
body and a reactive foundation. Contact is described with a generalized compliance 
condition and friction with a general subdifferential law. We assume that the wear 
generation process takes place only on a part of the contact surface, and the wear 
rate is described by a generalized  differential Archard condition that allows for the 
diffusion of the debris on the whole of the contact surface. This is the main novelty 
in the model. Such phenomena of wear diffusion can be found in many engineering 
settings, but in mathematical publications on contact and wear, it is tacitly assumed that 
the wear debris is removed from the surface once they are formed, which is the case 
some cases, such as car engines where the oil transports the debris away. The only 
mathematical works (that we are aware of) in which the wear debris 
remains on the surface and its diffusion is taken into account are
\cite{aShillorSofoneaTelega2004a, SST03}, but there the contact surface was
assumed to be planar. However, in most cases in applications, and those in joint
replacements, the surfaces are curved. Moreover, in \cite{aShillorSofoneaTelega2004a,SST03} 
the authors considered a quasistatic process and a moving foundation. 

The novelty of this paper lies in that the process is assumed to be dynamic, the 
contact  surface is a manifold and so we use of surface gradients and the Laplace--Beltrami 
operator instead of the linear diffusion equation.  Also, we use a general nonmonotone 
subdifferential conditions to model friction, which is an extension of the classical formulation 
as a variational inequality with a subdifferential in the sense of convex analysis.  In addition,
the method of proof is new and very different from the usual one based on the use of results 
for variational inclusions.

The model for the processes consists of two coupled equations: the first is the dynamic equation
of motion of a viscoelastic body and it contains a frictional multivalued 
term. The second one describes the diffusion of the wear debris on the contact surface of the 
body. Our key result is the theorem on the existence of a weak solution to the problem. In 
contrast to \cite{gasinski2, GOS-wear-15} (where the debris diffusion is modeled, but the contact 
surface is assumed to be flat) we do not use the Banach fixed point argument, but we base our 
approach on the Kakutani--Ky Fan--Glicksberg theorem that allows us to remove of the limitations 
on the constants present in the model at the cost of getting only existence, and not the uniqueness 
of a solution. In such a way we present a new way to obtain existence results for contact problems 
with friction and wear  diffusion. 

We remark here that we do not take into account adhesion effects in the model, and in many 
contact problems, one should also take into account the process of adhesion that is coupled 
with friction and wear diffusion. For instance, clinical practice shows that adhesion plays an
important role at the bone-implant interface, and for further details we
refer to  \cite{RT01, RTS01} and the references therein.

The main mathematical difficulties of this paper lie in the formulation of the wear diffusion not on 
a subset of $\mathbb{R}^2$, but on a 2D manifold in $\mathbb{R}^3$. Also, due to the fact 
we do not impose any 
smallness condition on the constants in the model, we cannot use the Banach fixed point argument 
(such as it is done in \cite{MOSBOOK}) that also asserts the solution uniqueness. In our approach 
we do not need any assumptions on the smallness of the data, but we obtain only the 
existence of a weak solution. Due to the rather general assumptions on the nonlinearities appearing 
in the problem, we are forced to use a truncation, and we first obtain the solutions to the truncated 
problem. We then obtain the necessary a priori estimates and remove this restriction by 
passing to the limit with truncation parameter.

The paper is organized as follows.   Section \ref{Model} describes the
`classical model' for the process. We also describe shortly the equation for the
wear diffusion on the contact manifold. Section 3  lists the assumptions on the
problem data and derives its variational formulation, Problem $P_{V}$. 
It is a system coupling an evolutionary differential inclusion
for the displacements  with a diffusion equation on the curved contact surface for the wear. 
Our main result, Theorem \ref{mainth}, states that under certain reasonable assumptions 
on the setting and problem data, there exists a solution of the variational problem, which 
is a weak solution for the `classical' model. The proof to the main existence result
is done in Section 4, and is based on the approach described above. 
Finally, Section 5 concludes with a short discussion and some open problems for further study.

\section{The model}
\label{Model}

  We consider a viscoelastic body that occupies a bounded domain
  $\Omega\subseteq \mathbb{R}^{d}$, $d=2,3$ that is acted upon by volume forces and 
  surface tractions. Although the case $\mathbb{R}^{2}$ is of interest mathematically, in this case
  the contact surface is a curve and there doesn't seem to be applied interest in such a case, so
  we have $d=3$ in mind. As a result, the body may come in frictional contact with a foundation and, 
  consequently, a part of the contacting surface may undergo wear. The wear particles or debris 
  produced in this process remain on the contact surface and undergo diffusion. Thus,  grooves 
  and  surface damage occur causing changes in the shape and properties of the contacting surface.
   We construct a mathematical model for the evolution of the mechanical state of
  the body during the time interval $[0,T]$, where $0<T<+\infty$.
  The unknowns in the problem are the displacements and the surface wear function.  We refer to
  \cite{aShillorSofoneaTelega2004a} for a more thorough discussion and 
  additional details of the process. The main novelty here is that the contact surface is curved, 
  while there and in \cite{gasinski2, GOS-wear-15} the contact surface was assumed to be flat, and
  moreover, here the process is dynamic.

    We let $\Gamma$ denote the boundary of $\Omega$ that is assumed to be
  Lipschitz continuous. We assume that $\Gamma$ consists of three pairwise disjoint  sets:
  $\overline{\Gamma}_D$ where the body is held fixed and $\mu_{d-1}(\Gamma_D)>0$;  $\overline{\Gamma}_N$ 
  where surface tractions act; and $\overline{\Gamma}_C$ that is the potential contact 
  surface, where friction and wear take place. The set $\Gamma_C$ is assumed to be a $C^2$ manifold with 
  smooth  boundary $\partial \Gamma_C$.   We note here that the assumption $\mu_{d-1}(\Gamma_D)>0$ is 
  not essential, but it allows to avoid certain technical difficulties, such as the lack of the Korn inequality. We 
  use the notation $\Omega_T=\Omega\times (0,T)$,
  $\Gamma_T=\Gamma\times (0,T)$, and similarly for $\Gamma_{DT}, \Gamma_{NT}$ and $\Gamma_{CT}$.

  The body is held clamped on $\Gamma_{D}$ and so the displacement field vanishes there.
  A  volume force of density $f_0$ acts in $\Omega_T$ and surface tractions 
  of density $f_N$ are applied on $\Gamma_{NT}$.
  An initial gap function $g$ can exist between the potential contact surface $\Gamma_{C}$
  and the foundation and it is measured along the outward normal $\nu$.
 
 We denote the displacement vector by $u:\overline{\Omega}\times [0,T]\to \mathbb{R}^d$, the velocity vector by
 $v=u'$, where the prime represents the time derivative,
 the linearized strain tensor by
 \[
 \eps= \eps(u)=(\varepsilon_{ij}),\qquad \varepsilon_{ij}=\frac{1}{2}(\nabla u + \nabla u^\intercal),
 \]
 so that $\eps'(u) =\eps(v)$, and the stress tensor by $\sigma=(\sigma_{ij})$, all defined 
 on $\overline{\Omega}_T$.
  
 We write
  the normal components and tangential vectors on the boundary $\Gamma_C$ as
  \[
  u_\nu=\nu\cdot u, \quad u_\tau=u-u_\nu \nu,\quad v_\nu=\nu \cdot v, \quad v_\tau=v-v_\nu \nu,
  \]
  and the normal and tangential stresses as
  \[
  \sigma_\nu=\sigma_{ij}\nu_i\nu_j,\quad \sigma_\tau=\sigma\cdot \nu - \sigma_\nu \nu.
  \]
  
  We assume that the material is viscoelastic with linear constitutive relation
  \begin{equation}
 \sigma(t) \ = \calA(\eps({v}(t)))+\calB(\eps(u(t))),\label{stressdefin}
  \end{equation}
 where, 
 \[
  \calA=(a_{ijkl}),\qquad \calB=(b_{ijkl}),
 \]
are the viscosity and elasticity tensors, respectively.  Thus,
\[
\sigma_{ij}= a_{ijkl}\varepsilon'_{kl} + b_{ijkl}\varepsilon_{kl},
\]
and summation over repeated indices is implied.  \ The viscosity and elasticity tensors 
satisfy the following assumptions. 

\begin{itemize}{
		\item[(H1)] $a_{ijkl},b_{ijkl}\in L^\infty(\Omega)$,
		\item[(H2)] $a_{ijkl}=a_{jikl}=a_{klij},\  b_{ijkl}=b_{jikl}=b_{klij}$
		for  $i,j,k,l=1,...,d$,
		\item[(H3)] $a_{ijkl}\xi_{ij}\xi_{kl} \geq \alpha|\xi|^2,\        
		b_{ijkl}\xi_{ij}\xi_{kl} \geq 0$ for $\alpha>0$ and all symmetric matrices $(\xi_{ij})_{i,j=1}^d$.}
\end{itemize}
 
\smallskip
We remark that the presence of the viscosity tensor $\mathcal{A}$ that is coercive is crucial in the proof 
of our main result. Some problems with hyperbolic inclusions have been recently studied in e.g.,\cite{noncoercive1,noncoercive2}, yet in our case they are not applicable and it remains an open 
problem to remove the viscosity term and consider a purely elastic material.

The displacement $u$ satisfies the momentum law
\begin{equation}
u''- \Div\sigma = f_0,
\end{equation}
where $f_0:\Omega\times [0,T]\to \mathbb{R}$ describes a volume force. 
The body is clamped on $\Gamma_D$, hence,
\begin{equation}
u = 0 \quad \mbox{on} \ \Gamma_D.
\end{equation}
and the traction $f_2$ ise applied on $\Gamma_N$,
\begin{equation}
\sigma\nu = f_2 \quad \mbox{on} \ \Gamma_N.
\end{equation}
 \vskip4pt
 
  We turn to describe the wear process and note that in \cite{SST03, aShillorSofoneaTelega2004a, 
  GOS-wear-15}   the contact surface $\Gamma_{C}$ was divided into two sub-domains $D_d$ and 
  $D_w$ and the wear took place only on the part $D_w$, while the diffusion of particles took place 
  on the whole of $\Gamma_{C}$.  In this work we assume that wear is generated and diffuses on 
  $\Gamma_{C}$, however, we note that it is straightforward to restrict wear generation to a part of 
  $\Gamma_C$ by introducing the appropriate characteristic function, as was done in the articles 
  above.
  
  Before we continue, since we are interested in the diffusion of the wear debris on the surface,
  we need to introduce the concepts and notation related to diffusion on curved surfaces.
  We follow \cite{Kalita-15} (see also the references therein) and in particular,  we  refer the reader to 
  \cite[ p.388]{GT83}, for the definition of hypersurfaces in $\mathbb{R}^n$ and surface gradients 
  on them. Let $S$ be a smooth surface in $\mathbb{R}^d$, if $G$ is a smooth function defined in 
  a neighborhood of $S$, the {\it surface or tangent gradient} on $S$ is defined as
  \[
  \nabla_S G =\nabla G -G_\nu \nu
  \]
  where $G_\nu=\nu \cdot \nabla G$ is the normal derivative of $G$ on $S$, recalling that
  $\nu$ denotes the unit outer normal vector to $S$.
  Thus, the surface gradient at $x \in S$ is the projection of the gradient at $x$ onto the 
  tangent plane to $S$ at $x$. Note, that for the above definition of $ \nabla_S G$ to make sense, 
  we need to extend $G$ from $S$ to an open neighborhood in $\mathbb{R}^d$, however, such 
  an extension always exists for smooth $S$ and the value of $ \nabla_S G$ does not depend on 
  the choice of the extension (see, e.g.,  \cite{Dz88}).  If we denote the components of the {\it surface
  gradient} by
  \[
   \nabla_S G=(D_i G) \qquad i=1,\ldots, d
  \] 
then the Laplace-Beltrami operator, which describes the spatial part of diffusion on the surface,
is defined by the {\it surface divergence} of the surface gradient, i.e.,
\[
\Delta_S G=  \nabla_S\cdot  \nabla_S G=D_kD_k G,
\]
where $k=1,\ldots, d$, and summation is implied.
Next, we assume that the manifold $S$ has a smooth boundary $\Gamma_S= \partial S$ and denote 
by $\nu_S$ the unit  outer normal to $S$ on $\Gamma_S$. Then, Green's formula  on $S$
is given by (see, e.g.,  \cite{DzE07})
\[
\int_S \left(\psi \Delta_S \varphi +   \nabla_S \psi \cdot  \nabla_S\varphi \right)\, dS=
\int_{\Gamma_S} \psi \nu_S \cdot   \nabla_S \varphi \,d\Gamma,
\]
and holds for each pair $(\varphi, \psi)$ of smooth functions  defined in a neighborhood of $S$. 
In our setting, $S=\Gamma_C$ and $\Gamma_S=\partial \Gamma_C$. For the sake of 
somewhat simplified notation  we use $ \nabla_\Gamma$ for the gradient and $\Delta_\Gamma$
for the Laplace--Beltrami operator on $\Gamma_C$. We use the notion of the Sobolev space 
$H^1(S)$ of functions on the manifold $S$, i.e., functions in $L^2(S)$ that have their surface 
gradients belongs to $L^2(S)^d$, see \cite{aubin} for the definition and properties of these 
functions on manifolds without boundary, and \cite{aubin2} for manifolds with smooth boundary, 
which is the case here.

  To describe the wear process and its diffusion, we introduce the wear function $\theta$ that is
  defined on the contact surface $\Gamma_{C}\times [0, T]$, and its evolution is governed by a 
  parabolic differential equation, and a zero flux boundary condition on $\partial\Gamma_C$, 
  \begin{equation}
  \frac{\partial \theta}{\partial\nu_\gamma} = 0 \quad \mbox{on} \ \partial \Gamma_C,
  \end{equation}
  since the debris cannot leave $\Gamma_C$.  We note that the rate form of the usual Archard's 
  law of wear (see, e.g., \cite{SST04-book}) states that the rate of surface wear is proportional to the frictional 
  traction, and the relative velocity, i.e., the power of the friction resistance force, and is given by
  \[
 h_w = \eta \mu p_{\nu}(u_{\nu}-g) |{v}_{\tau}(t)|,
  \]
where $\eta$ is the wear rate constant, $\mu$ is the friction coefficient, the function $p_{\nu}$ 
describes the normal stress, and more details are below, and   ${v}_{\tau}(t)$ is the tangential velocity. 
 As was done in \cite{SST03, aShillorSofoneaTelega2004a, GOS-wear-15}, we extend the Archard law
 and allow diffusion of the wear debris on the surface, i.e., we generalize $h_w$ to a function satisfying

\smallskip
\begin{itemize}
\item[(H4)] $h_w\colon \mathbb{R}^d\times\mathbb{R}^d\times \to \mathbb{R}$ is continuous and for some $C_w>0$, and  for every $u,v\in \mathbb{R}^d, \theta\in \mathbb{R}$,
$|h_w(u,v)|\le C_w(1+|u|^2+|v|^2)$.
\end{itemize}
It is straightforward to see that when the function $p_{\nu}$ has at most linear growth, 
the wear source function satisfies this assumption. Then, the extended version of the  Archard law for a 
pointwise wear process of growth and diffusion on  $\Gamma_C$ is given by
\begin{equation}
 \theta'-  \kappa \Delta_\Gamma \theta = h_w(u,v),
 \end{equation}
where $\kappa$ is the wear diffusion constant. We note that the debris source $h_w$ 
depends on the wear and the surface speed, since the wear changes the surface 
geometry, the debris changes the friction resistance, and 
the friction coefficient is known to depend on the speed.
\medskip

We turn to the contact conditions on $\Gamma_C$.
We describe the contact process on $\Gamma_C$ by a general condition of the form 
\begin{equation}
-\sigma_\nu = h_\nu(u).
\end{equation}
We impose the following hypotheses on $h_\nu$:
\begin{itemize}
	\item[(H5)] $h_\nu\colon \mathbb{R}^d \to \mathbb{R}$ is continuous and 
	$|h_\nu(u)|\le C_\nu (1+|u|)$ for every $u,v\in \mathbb{R}^d, \theta\in \mathbb{R}$, 
	for some $C_\nu>0$.
\end{itemize}

An example of a law satisfying this conditions is the normal compliance condition 
(see, e.g., \cite{SST04-book} and the references therein),
\[
\sigma_\nu=p_\nu(u_\nu-g),
\]
where $p_\nu$ is the normal compliance function that vanishes for negative arguments, 
since then there is no contact between the body and the foundation at the point of 
$\Gamma_C$. In the literature it was typically assumed to be of the form
\[
p_\nu(u_\nu-g)=\lambda_{\nu c} (u_\nu-g)_+^m,
\]
where $(\cdot)_+$ was the positive part, $\lambda_{\nu c}$ was assumed to be a large number 
and $m\geq 1$ was the normal compliance exponent (see also\cite{SST04-book}).

We describe  friction with a general subdifferential law
\begin{equation}
-\sigma_\tau \in h_\tau(u,v,\theta)\partial j(v_\tau),
\end{equation}
where $j$ is a locally Lipschitz function and $\partial j$ stands for its Clarke subdifferential (see Section 3
below for details). We suppose that  $h_\tau$ and $j_\tau$ satisfy
\begin{itemize}
	\item[(H6)]  $h_\tau\colon \mathbb{R}^d\times \mathbb{R}^d\times \mathbb{R}\to \mathbb{R}_+$ is a continuous function and $|h_\tau(u,v,\theta)|\le C_\tau (1+|u| + |v|+|\theta|)$ for every $u,v\in \mathbb{R}^d, \theta\in \mathbb{R}$, for some $C_\tau >0$ and 
	\item[(H7)] $j_{\tau}\colon\Gamma_{C}\times \mathbb{R}^d\longrightarrow \mathbb{R}$ is a function such that $j_{\tau}(\cdot,\xi)$ is measurable on $\Gamma_{C}$ for every $\xi\in \mathbb{R}^d$, $j_{\tau}(x,\cdot)$ is locally Lipschitz on $ \mathbb{R}^d$ for a.e. $x\in\Gamma_{C}$ and moreover $\zeta \cdot \xi \ge 0$ for $\zeta \in \partial j_\tau(x,\xi)$ for all $\xi\in \mathbb{R}^n$ a.e $x\in \Gamma_C$.
	\item[(H8)]
	there exist $c_{1\tau}>0$ such that
	$\|\partial j_{\tau}(x,\xi)\|\le c_{1\tau}$
	for every $\xi\in \mathbb{R}^d$ and a.e. $x\in\Gamma_{C}$.
	\end{itemize}
As an example of such a friction law, one may use a version of the Coulomb law,
\[
|\sigma_\tau |\le \mu p_\nu(u_\nu-g),
\]
where $\mu$ is the coefficient of friction and $\mu p_\nu$ is the friction bound, and
\[
\mbox{if} \  v_\tau\neq 0 \ \mbox{then}  \ \sigma_\tau =- \mu p_\nu(u_\nu-g)\frac{v_\tau}{|v_\tau|}.
\]
That is, frictional resistance takes place only when there is relative motion and then it opposes it.
We can write the condition in a condensed form as an inclusion
\[
\sigma_\tau \in - \mu p_\nu(u_\nu-g)\partial |v_\tau|,
\]
where $\partial |r|$ is the convex subdifferential of $|r|$, i.e.,
\[
\partial |r| =
\begin{cases}
1   & \; r>0, \\
[-1, 1]      & \; r=0, \\
-1   & \; r<0.
\end{cases}
\]
We use the formalism of Clarke subdifferentials in the friction law to account for possible nonmonotonicity 
in the relation between the tangential velocity and the friction force density. This represents the fact 
that kinetic friction can be less than static friction, i.e., a drop of the friction force can occur 
when motion starts. 

Finally, the initial conditions for the displacement, velocity and wear functions are,
\begin{equation}
u(0)=u_0, \ \ u'(0)= v_0, \ \ \theta(0)=\theta_0.\label{intial}
\end{equation}

\section{Variational formulation} 
\label{sect_vf}

 We turn to the variational formulation of problem \eqref{stressdefin}--\eqref{intial}. To that end,  
 we first introduce the concepts that are needed belowl, and then the variational formulation.
  In what follows, $i,j=1,\ldots, d$ everywhere, the summation convention over repeated 
 indices is used, and an index following a comma indicates a partial derivative.

  For a reflexive Banach space $E$, we denote by
  ${\langle \cdot, \cdot \rangle}_{E^* \times E}$ 
  the duality pairing between the dual space $E^*$ and $E$. If $E$ is a Hilbert space, then the 
  scalar product in $E$ is denoted by $(\cdot,\cdot)_E$.  
  Throughout this paper, we denote by $C$ a generic positive constant
  that depends on the problem data and may change its value form line to line. 
  By $|\cdot|$ we denote the Euclidean norm in $\mathbb{R}^d$ or $\mathbb{S}^d$, 
  the space of symmetric $d\times d$ matrices.
  
  To obtain a variational formulation of the model in Section~\ref{Model}, 
  we need the following functional spaces:
  \begin{equation} \nonumber
  \quad H=L^2(\Omega)^d, \ \ 
  V= \{v\in H^1(\Omega)^d\mid  v = 0 ~\textrm{on} ~ \Gamma_D\}. 
  \end{equation}
We know that for $\delta \in (0,\frac{1}{2})$ the embedding $i\colon V\to H^{1-\delta}(\Omega)^d$ 
is compact, and if $\gamma_1\colon H^{1-\delta}(\Omega)^d\to L^2(\Gamma)^d$, 
denotes the trace operator, which is continuous (see, e.g., \cite[Theorem 2.21]{MOSBOOK}), then 
the trace operator $\gamma=\gamma_1 i\colon V\to L^2(\Gamma)^d$
is compact. To simplify slightly the notation, we use $v$ instead of $\gamma v$. 

For a fixed and finite $T>0$ we define the following standard time-dependent spaces:
\begin{align*}
&\mathcal{W}=\{v\in L^2(0,T;V)\mid v'\in L^2(0,T;V^*)\}, \\[2mm]
&\mathcal{W}_\Gamma = \{\theta\in L^2(0,T;H^1(\Gamma_C))\mid \theta'\in L^2(0,T;H^1(\Gamma_C)^*)\}. 
\end{align*}

The Clarke subdifferential of a locally
Lipschitz functional $\varphi \colon \mathbb{R}^d \to \mathbb{R}$  is given by (see~\cite{C})
\[
\partial \varphi (x) = \mbox{conv}\{\lim_{n \to \infty} \nabla \varphi(x_n)  \mid x_n\to x, \nabla \varphi(x_n) \ \textrm{converges}, \ \textrm{and}\ x_n\notin N\cup  N_\varphi\}
\]
where $N_\varphi$ is a set of measure zero, outside of which $\varphi$ is differentiable, and $N$ is 
any set of measure zero. It is possible to generalize the notion of the Clarke subdifferential to functionals 
defined on Banach spaces, cf., \cite{C, MIGDENKBOOK, MOSBOOK}, but for our purposed it is sufficient 
to consider this definition on $\mathbb{R}^d$. 

Now, we define the operators $A, G\colon V\to V^*$ by 
\begin{align*}
\langle {Au,w}\rangle _{V^*\times V}&=\int_\Omega a_{ijkl}\frac{\partial u_k}{\partial x_l}
\frac{\partial w_i}{\partial x_j}\, dx, \quad\quad   u, w \in E,\\
\langle{Gu,\eta}\rangle_{V^*\times V}&=\int_\Omega b_{ijkl}\frac{\partial u_k}
{\partial x_l}\frac{\partial w_i}{\partial x_j}\, dx, \quad\quad  u,w \in E.
\end{align*}

We assume that $f_0(t) \in L^2(\Omega)^d$ and $f_2(t)\in L^2(\Gamma_C)^d$ and this allows 
us to define $f:(0,T)\to V^*$ as
\[
\langle{f(t),\eta}\rangle_{V^*\times V} = \int_\Omega f_0(t) w \, dx + \int_{\Gamma_N} f_2(t) w \, d\Gamma,  \quad\quad  w \in E.
\]

Applying the Green formula and the usual manipulations, we are able to derive the following weak formulation of the problem governed by \eqref{stressdefin}--\eqref{intial}.
 
\begin{problem}
\label{P_V}
Find $u\in L^2(0,T;V)$ with  $v\in \mathcal{W}$ and $\theta\in \mathcal{W}_\Gamma$ such that 
\begin{align}
&\nonumber \langle v'(t),w\rangle_{V^*\times V} + \langle Av(t),w\rangle_{V^*\times V} + \langle Gu(t),w\rangle_{V^*\times V}  \\[2mm]
&\qquad \qquad\qquad \nonumber + \int_{\Gamma_C} h_\nu (u(t))w_\nu \, d\Gamma + \int_{\Gamma_C} h_\tau (u(t),v(t),\theta(t))\xi(t)w_\tau\, d\Gamma  \\[2mm]
& \qquad =\langle f(t),w\rangle_{V^*\times V}, \quad\quad \textrm{for every} \ w\in V, \ \textrm{a.e.} \ t\in (0,T),\\[2mm]
& \xi(t)\in S_{\partial j_\tau}^2 (v_\tau(t))\quad \textrm{a.e.} \ t\in (0,T), \\[2mm]
 & \langle \theta'(t),\eta\rangle_{H^1(\Gamma_C)^*\times H^1(\Gamma_C)} + \kappa ( \nabla_\Gamma \theta(t),\nabla_\Gamma \eta)_{L^2(\Gamma_C)^d} \\[2mm]
&\qquad \nonumber=\int_{\Gamma_C} h_w (u(t),v(t))\eta \, d\Gamma, \quad\quad \textrm{for every} \ \eta \in H^1(\Omega), \ \textrm{a.e.} \ t\in (0,T),\\[2mm]
&\;  u(0)=u_0, \ u'(0)=v_0, \ \theta(0)=\theta_0.
\end{align}
\end{problem}

Here, we used, the notation $v=u'$, i.e.
\begin{equation}\label{hist}
u(t)= u_0 + \int_0^t v(s)\, ds,\qquad  t\in (0,T).
\end{equation}

\noindent
By $\xi(t)\in S_{\partial j_\tau}^2(v_\tau(t))$ we understand a $L^2$-measurable selection 
out of the subdifferential $\partial j$ at $v_\tau$.
We also write, for the sake of simplicity, $H(h)$ as a collection of the hypotheses 
$H(h_w), H(h_\nu)$ and $H(h_\tau)$.

We are now able to state the main theorem of this paper.
  \begin{theorem}
  \label{mainth}
  Assume that $u_0\in V, v_0\in H$, $\theta_0\in L^2(\Gamma_C)$, $f_0\in L^2(0,T;V^*), \ f_2\in L^2(0,T;L^2(\Gamma_N)^d)$, and $\kappa >0$. Under hypotheses (H1)--(H8) there exists a solution to Problem~\ref{P_V}. 
  \end{theorem}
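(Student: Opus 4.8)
The plan is to follow the truncation plus fixed-point strategy announced in the introduction. First I would truncate the nonlinearities that carry super-linear growth — principally $h_w$, which by (H4) grows quadratically, and the friction modulus $h_\tau$, which by (H6) grows linearly but is coupled to the (a priori only $L^2$-in-time) trace of $v$. Introduce, for a parameter $L>0$, the truncation $T_L(r)=r$ for $|r|\le L$ and $T_L(r)=Lr/|r|$ otherwise, and replace $h_w(u,v)$ by $h_w(T_L u, T_L v)$ and similarly inside $h_\tau$ and $h_\nu$; this makes all the nonlinear boundary/volume terms bounded operators from the relevant spaces. For fixed $L$ one then solves the truncated coupled system by a fixed-point argument on the product of a ball in $L^2(0,T;L^2(\Gamma_C)^d)$ (for the velocity trace $v_\tau$) and a ball in $C([0,T];L^2(\Gamma_C))$ (for the wear $\theta$): given a pair $(\hat v,\hat\theta)$, first solve the parabolic Laplace--Beltrami equation $(7)$ with right-hand side $h_w(T_L\hat u, T_L\hat v)\in L^2(0,T;L^2(\Gamma_C))$, which is a standard linear parabolic problem on the manifold $\Gamma_C$ (well-posed because $\kappa>0$ and $\Gamma_C$ is a $C^2$ manifold with smooth boundary, using the $H^1(\Gamma_C)$-coercivity of $-\kappa\Delta_\Gamma$ up to a zeroth-order shift and the Lions--Aubin setting $\mathcal W_\Gamma$); this yields $\theta$. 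Then, with $\hat\theta$ (or the just-computed $\theta$) frozen in $h_\tau$, solve the evolutionary hemivariational inequality $(5)$--$(6)$ for $v$, which is a second-order-in-time inclusion with the coercive operator $A$ (coercivity from (H3), $\alpha>0$), a bounded monotone-in-the-normal-direction compliance term (H5), and a Clarke-subdifferential friction term that is bounded by (H8) and whose generalized gradient satisfies the sign condition (H7); existence for such inclusions is classical (surjectivity of pseudomonotone/multivalued operators, or a Galerkin plus Clarke-subdifferential convergence argument as in \cite{MIGDENKBOOK,MOSBOOK}). This defines a multivalued map $(\hat v,\hat\theta)\mapsto(v,\theta)$.

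The second step is to verify the hypotheses of the Kakutani--Ky Fan--Glicksberg theorem for this map on a suitable convex, closed, bounded set in the chosen product space endowed with a topology making the map upper semicontinuous with closed convex values and relatively compact image. Convexity of the image values follows from the convexity of the solution set of the linear parabolic equation (single-valued, so trivial) and the convexity of the solution set of the friction inclusion for fixed data, which holds because $\partial j_\tau$ is the Clarke subdifferential of a (here, by (H8)) globally Lipschitz integrand and the remaining operators are affine; some care is needed here and one may instead close the argument by working with the full selection $\xi$ as an extra component of the fixed-point variable so that the set over which we take a fixed point is literally a set of triples $(v,\theta,\xi)$ and convexity is inherited from $S^2_{\partial j_\tau}$ being convex-valued. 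Compactness of the image is where the regularity gain is used: the velocity satisfies a uniform (in the frozen data, for fixed $L$) bound in $L^2(0,T;V)\cap\{v'\in L^2(0,T;V^*)\}$, so by the Aubin--Lions lemma and the compactness of the trace $\gamma\colon V\to L^2(\Gamma)^d$ stated in the excerpt, the velocity traces lie in a compact subset of $L^2(0,T;L^2(\Gamma_C)^d)$; similarly $\theta$ is bounded in $\mathcal W_\Gamma$ and hence lies in a compact subset of $L^2(0,T;L^2(\Gamma_C))$ and (after interpolation) in $C([0,T];L^2(\Gamma_C))$. Upper semicontinuity reduces to a closedness-of-graph check: take a convergent sequence of data, extract the usual weak/strong limits, and pass to the limit in all terms — the linear terms by weak continuity, the truncated nonlinear terms $h_w(T_L\cdot,T_L\cdot)$, $h_\nu(T_L\cdot)$, $h_\tau(T_L\cdot,T_L\cdot,\cdot)$ by continuity (H4)--(H6) together with strong $L^2(\Gamma_C)$-convergence of the traces and dominated convergence (the truncation makes them uniformly bounded), and the Clarke-subdifferential term by the standard argument that the graph of $\xi\mapsto S^2_{\partial j_\tau}(\xi)$ is closed with respect to strong $\times$ weak $L^2$ convergence, using Aubin's convergence theorem for the generalized gradient and (H8). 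This produces, for each $L$, a solution $(u^L,v^L,\theta^L,\xi^L)$ of the $L$-truncated Problem~\ref{P_V}.

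The final step is to remove the truncation by deriving a priori estimates uniform in $L$ and passing $L\to\infty$. Testing $(5)$ with $w=v^L(t)$ and integrating, the coercive viscosity term controls $\|v^L\|_{L^2(0,T;V)}^2$, the inertial term gives $\sup_t\|v^L(t)\|_H^2$, the term $Gu^L$ is handled via $u^L(t)=u_0+\int_0^t v^L$ and Young/Gronwall, the compliance term is controlled by (H5) and the trace inequality with an arbitrarily small constant absorbed into $\alpha$, and the friction term is bounded by $c_{1\tau}$ times $\|v^L_\tau\|_{L^1(\Gamma_{CT})}$ uniformly, so Gronwall yields bounds on $v^L$ in $L^2(0,T;V)\cap L^\infty(0,T;H)$ independent of $L$; reading off $v'^L$ from the equation gives a bound in $L^2(0,T;V^*)$, hence $v^L$ is bounded in $\mathcal W$. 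Testing $(7)$ with $\eta=\theta^L(t)$ and using (H4) together with the just-obtained bounds on $u^L,v^L$ (note the quadratic growth of $h_w$ is now harmless because $u^L,v^L$ are bounded in $L^2(0,T;H^1)$ and their traces in $L^2$, and $\|u^L\|^2+\|v^L\|^2$ is integrable in space-time) gives a uniform bound on $\theta^L$ in $L^2(0,T;H^1(\Gamma_C))\cap L^\infty(0,T;L^2(\Gamma_C))$ and then on $\theta'^L$ in $L^2(0,T;H^1(\Gamma_C)^*)$, so $\theta^L$ is bounded in $\mathcal W_\Gamma$. Now extract weakly/strongly convergent subsequences exactly as in the upper-semicontinuity step; the key point is that the strong $L^2(\Gamma_{CT})$ convergence of the traces $v^L\to v$ and the strong $L^2$-in-space-time convergence of $u^L$ make $T_L u^L\to u$ and $T_L v^L\to v$ strongly in $L^2(\Gamma_{CT})^d$ (since the truncation is eventually inactive on the set where the limit is finite, or more carefully: $|T_L u^L - u|\le|T_L u^L - T_L u| + |T_L u - u|$ with the first term $\le|u^L-u|$ and the second $\to 0$ a.e.\ and dominated), so in the limit the truncation disappears and we recover a solution of the untruncated Problem~\ref{P_V}. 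The main obstacle, and the place I would spend the most care, is the combination in Step 3 of passing to the limit in the \emph{quadratically growing} source $h_w$ on the curved surface while only controlling $\theta^L$ in $\mathcal W_\Gamma$ — one needs the strong trace convergence of $v^L$ (hence Aubin--Lions with the manifold spaces $H^1(\Gamma_C)$, for which the compact embedding $H^1(\Gamma_C)\hookrightarrow L^2(\Gamma_C)$ on a compact $C^2$ manifold with smooth boundary is the enabling fact) together with the uniform $L^2(0,T;V)$ bound to make $|u^L|^2+|v^L|^2$ equi-integrable on $\Gamma_{CT}$ — and, in parallel, the closedness of the Clarke-subdifferential selection under the weak convergence of $\xi^L$, which is where hypotheses (H7)--(H8) and the upper semicontinuity of $\partial j_\tau$ are essential.
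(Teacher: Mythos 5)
Your proposal follows essentially the same route as the paper: truncate the nonlinearities with a $1$-Lipschitz cutoff, decouple the system by freezing the data, apply the Kakutani--Ky Fan--Glicksberg theorem on a bounded convex set of triples $(v,\theta,\xi)$ (your remark about adjoining the selection $\xi$ as a fixed-point component is exactly the paper's device for securing convex values), and then remove the truncation via $l$-independent energy estimates, Aubin--Lions trace compactness, the splitting $|N_l u_l-u|\le|N_l u_l-N_l u|+|N_l u-u|$, and the closedness of the Clarke-subdifferential selection under strong$\times$weak convergence. The only cosmetic difference is that the paper also freezes the selection $\bar\xi$ in the velocity equation, so each auxiliary problem is single-valued rather than a hemivariational inequality, but this does not change the argument.
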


We conclude that the model  \eqref{stressdefin}--\eqref{intial} has a weak or variational solution.
The uniqueness of the solution remains an unresolved question.

\section{Proof of Theorem~\ref{mainth}}
\label{sec:ex}

In this section we prove the existence theorem. The idea of the proof is as follows. First, we decouple the
coupled Problem~\ref{P_V} by replacing  the coupling terms with given functions and introduce 
truncation operators. We obtain the existence of solutions for the decoupled and truncated problems 
independently. Then, we apply the Kakutani--Ky Fan--Glicksberg fixed point theorem to show the existence 
result for the original problem. Finally, we pass to the limit with the truncation parameter. In the 
proof we always assume (H1)--(H8), and that $u_0\in V, v_0\in H$, $\theta_0\in L^2(\Gamma_C)$, $f_0\in L^2(0,T;V^*), \ f_2\in L^2(0,T;L^2(\Gamma_N)^d)$, and $\kappa >0$, so we do not repeat 
these assumptions in the auxiliary lemmas below.

We start by recalling the fixed point theorem.

\begin{theorem}[Kakutani--Ky Fan--Glicksberg]\label{Kakutani}
	Let $S\subset E$ be a non-empty, compact, and convex set, where $E$ is a locally convex Hausdorff topological vector space. Let the set-valued function $\varphi\colon S\to 2^S$ have non-empty, convex values, and let $\textrm{Gr}(\varphi) = \{ (x,y)\in S\, |\ y\in \varphi(x) \} $ be a closed set in the product topology of $E\times E$. Then, the set $\{x\in S\mid x\in \varphi(x)\}$ of fixed points of $\varphi$ is non-empty and compact.
	\end{theorem}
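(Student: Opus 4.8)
This is a classical result (the set-valued fixed point theorem of Kakutani in $\mathbb{R}^n$, extended to locally convex spaces independently by Ky Fan and by Glicksberg), and in the paper it may simply be quoted from the literature; nevertheless, here is the line of argument I would reproduce. The whole statement reduces to \emph{Brouwer's} fixed point theorem for compact convex subsets of a finite-dimensional space, through a partition-of-unity approximation that exploits the local convexity of $E$, followed by a limiting argument that uses the closed graph (equivalently, the upper semicontinuity) of $\varphi$.

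\textbf{Approximate fixed points.} Since $S$ is compact and $\Gr(\varphi)$ is closed, $\varphi$ is upper semicontinuous and its values $\varphi(x)$, being closed subsets of the compact set $S$, are compact. Fix an open convex balanced neighbourhood $V$ of $0$ in $E$. By compactness, $S\subseteq\bigcup_{i=1}^{k}(a_i^V+V)$ for finitely many $a_i^V\in S$; choose a continuous partition of unity $\{\beta_i^V\}_{i=1}^{k}$ on $S$ subordinate to this cover, and pick $b_i^V\in\varphi(a_i^V)$. Then $y\mapsto\sum_{i=1}^{k}\beta_i^V(y)\,b_i^V$ is a continuous self-map of the compact convex set $C_V:=\mathrm{conv}\{b_1^V,\dots,b_k^V\}\subseteq S$, which lies in a finite-dimensional subspace, so Brouwer's theorem produces $x_V\in C_V$ with $x_V=\sum_i\beta_i^V(x_V)\,b_i^V$; moreover every index $i$ with $\beta_i^V(x_V)>0$ satisfies $x_V\in a_i^V+V$, hence $a_i^V-x_V\in V$ since $V$ is balanced.

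\textbf{Passage to the limit.} Direct the open convex balanced neighbourhoods $V$ of $0$ by reverse inclusion; by compactness of $S$ the net $(x_V)$ has a subnet converging to some $x^\ast\in S$, and I claim $x^\ast\in\varphi(x^\ast)$. Let $U$ be an arbitrary closed convex neighbourhood of $0$. Upper semicontinuity of $\varphi$ gives a neighbourhood $N$ of $x^\ast$ with $\varphi(N\cap S)\subseteq\varphi(x^\ast)+U$. Since $V\to 0$ and $x_V\to x^\ast$ along the subnet, eventually $V$ and $x_V-x^\ast$ are small enough that each active $a_i^V\in x_V+V\subseteq N$; then $b_i^V\in\varphi(x^\ast)+U$ for every active $i$, whence $x_V=\sum_i\beta_i^V(x_V)b_i^V\in\mathrm{conv}(\varphi(x^\ast)+U)=\varphi(x^\ast)+U$, using that $\varphi(x^\ast)$ and $U$ are convex. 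As $\varphi(x^\ast)$ is compact and $U$ is closed, $\varphi(x^\ast)+U$ is closed, so $x^\ast\in\varphi(x^\ast)+U$. Letting $U$ range over all closed convex neighbourhoods of $0$ and using that $E$ is Hausdorff and $\varphi(x^\ast)$ is closed, $\bigcap_U(\varphi(x^\ast)+U)=\varphi(x^\ast)$, so $x^\ast\in\varphi(x^\ast)$ and the fixed-point set is non-empty. It is compact because it equals $\{x\in S:(x,x)\in\Gr(\varphi)\}$, the preimage of the closed set $\Gr(\varphi)$ under the continuous map $x\mapsto(x,x)$, hence a closed subset of the compact set $S$.

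\textbf{Main obstacle.} The genuinely delicate point is the last step: $x_V$ is only a convex combination of values of $\varphi$ taken at points $a_i^V$ near $x_V$, and the number of such points is not controlled as $V$ shrinks, so one cannot simply take a limit of a single $b_i^V$. The resolution is that upper semicontinuity forces \emph{all} the active values $b_i^V$ simultaneously into an arbitrarily small neighbourhood of $\varphi(x^\ast)$, after which the convexity and compactness of $\varphi(x^\ast)$ close the argument; the Hausdorff and local-convexity hypotheses on $E$ enter precisely here. All remaining steps (the finite subcover, the subordinate partition of unity on the compact Hausdorff space $S$, Brouwer's theorem on $C_V$) are routine.
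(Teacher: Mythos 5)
The paper never proves this statement: Theorem \ref{Kakutani} is recalled as a classical result (due to Ky Fan and to Glicksberg, extending Kakutani's finite-dimensional theorem) and is used purely as a black box in Lemma \ref{fixed}, so there is no in-paper argument to compare yours against. Your proof is the standard Fan--Glicksberg argument and is correct: the partition-of-unity approximation subordinate to the cover by translates $a_i^V+V$ reduces the problem to Brouwer's theorem on the compact convex hull $C_V$ of finitely many selected values $b_i^V\in\varphi(a_i^V)$, and the limit passage works because the closed graph, together with the fact that the range lies in the compact set $S$, upgrades to upper semicontinuity with compact values, after which convexity of $\varphi(x^\ast)$ and of $U$ gives $x_V\in\varphi(x^\ast)+U$ and closedness of $\varphi(x^\ast)+U$ lets you keep this in the limit. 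The facts you invoke without proof --- closed graph into a compact set implies upper semicontinuity, the sum of a compact set and a closed set in a topological vector space is closed, closed convex neighbourhoods of $0$ form a base in a locally convex space so that $\bigcap_U\bigl(\varphi(x^\ast)+U\bigr)=\varphi(x^\ast)$ --- are standard and easy to verify, and your identification of the fixed-point set as the preimage of $\Gr(\varphi)$ under the diagonal map correctly yields its compactness. So the proposal is a complete and correct proof of the quoted theorem, filling in what the paper delegates to the literature.
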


Next, for $l>0$, we define truncation operators $N_l\colon \mathbb{R}^d\to \mathbb{R}^d$ 
and  $M_l\colon \mathbb{R}\to \mathbb{R}$ by 
\vskip4pt

$\begin{array}{ll}
\qquad N_l(x)=\begin{cases} x, \qquad |x|\le l, \\
\frac{x}{|x|} l, \quad |x|>l.
\end{cases} & M_l(x)=\begin{cases} x, \qquad |x|\le l, \\
\frac{x}{|x|} l, \quad |x|>l.
\end{cases}
\end{array}$
\medskip

The following lemma is straightforward to show, and we present the proof for the sake 
of completeness.

 \begin{lemma}\label{Liptrunc}
 	Truncation operators $N_l$ and $M_l$ are Lipschitz continuous with a constant 1.
 \end{lemma}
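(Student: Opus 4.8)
The plan is to verify directly from the explicit formulas defining $N_l$ and $M_l$ that each map is $1$-Lipschitz; since the two maps have the same structural form (radial truncation onto a ball), it suffices to treat $N_l\colon\mathbb{R}^d\to\mathbb{R}^d$ and observe that $M_l$ is the case $d=1$. Write $B_l=\{x:|x|\le l\}$ and note that $N_l$ is precisely the nearest-point projection onto the closed convex set $B_l$: for $|x|\le l$ we have $N_l(x)=x$, and for $|x|>l$ the point $(l/|x|)x$ is the unique minimizer of $|y-x|$ over $y\in B_l$. I would then invoke the standard fact that the metric projection onto a closed convex subset of a Hilbert space is nonexpansive, which gives $|N_l(x)-N_l(y)|\le|x-y|$ for all $x,y$, and likewise for $M_l$.

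If a self-contained argument is preferred over citing the projection theorem, I would instead split into the three cases according to whether $x,y$ lie inside or outside $B_l$. The case $|x|,|y|\le l$ is trivial since $N_l$ is the identity there. For the case $|x|,|y|>l$, one writes $N_l(x)-N_l(y)=l\bigl(x/|x|-y/|y|\bigr)$ and estimates the norm of the difference of the two unit vectors; a short computation (expanding $|x/|x|-y/|y||^2 = 2 - 2\,x\cdot y/(|x|\,|y|)$ and comparing with $|x-y|^2$, using $|x|,|y|\ge l$) shows this is bounded by $|x-y|$. For the mixed case, say $|x|\le l<|y|$, one uses that $N_l(x)=x$ lies on the segment from $N_l(y)=(l/|y|)y$ toward $y$ in a controlled way, or simply observes $|N_l(x)-N_l(y)| = |x - (l/|y|)y| \le |x-y|$ because $(l/|y|)y$ is the point of $B_l$ closest to $y$ and $x\in B_l$, so $|x-(l/|y|)y|\le$ ... — more cleanly, apply the nonexpansiveness inequality $|P x - P y|\le |x-y|$ with $Px=x$.

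I do not anticipate a genuine obstacle here: the only mild point of care is the mixed case, where one should not try to brute-force the algebra but rather recognize the projection structure, which makes all three cases uniform. The same remarks apply verbatim to $M_l$ with $\mathbb{R}$ in place of $\mathbb{R}^d$, so no separate argument is needed. This completes the proof sketch of Lemma~\ref{Liptrunc}.
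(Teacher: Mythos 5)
Your proposal is correct, and your primary route differs from the paper's. You identify $N_l$ (and $M_l$ as its one-dimensional case) with the metric projection onto the closed ball $B_l=\{x:|x|\le l\}$ and invoke the standard nonexpansiveness of projections onto closed convex subsets of a Hilbert space; this is a clean, one-line argument that handles all cases uniformly and makes clear why the constant is exactly $1$. The paper instead gives a fully elementary proof by the same three-case split you describe as a fallback, verifying $|N_l(x)-N_l(y)|^2\le|x-y|^2$ by direct expansion of inner products in the mixed case and in the case $|x|,|y|>l$. What your approach buys is brevity and conceptual clarity at the cost of citing the projection theorem; what the paper's buys is self-containedness. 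One caveat on your self-contained variant: your direct treatment of the mixed case $|x|\le l<|y|$ trails off --- the inequality $|x-(l/|y|)y|\le|x-y|$ does not follow merely from $(l/|y|)y$ being the nearest point of $B_l$ to $y$; one needs the obtuse-angle (variational) characterization $\bigl(y-P y,\,x-P y\bigr)\le 0$ for $x\in B_l$, or an explicit computation as in the paper. Since you ultimately route that case through nonexpansiveness anyway, the argument as a whole stands.
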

\begin{proof}
	We present the proof only for $N_l$. Let $x,y\in \mathbb{R}^d$ and we 
	consider the three cases: $|x|,|y|\le l$, $|x|> l, |y| \le l$, and $|x|,|y|>l$. 
	In the fist case, we immediately obtain the result. In the second case, we 
	calculate the inner products in $\mathbb{R}^d$,
\begin{eqnarray}
&& \nonumber( N_l(x)-N_l(y), N_l(x)-N_l(y) ) = \left( l\frac{x}{|x|}-y,l\frac{x}{|x|}-y \right) =l^2-\frac{2 l( x,y ) }{|x|} + |y|^2 \\[2mm]
&&\quad \nonumber \le l^2 - |x|^2  +2 ( x,y ) \frac{|x|-l}{|x|} + |x-y|^2 \le | x-y|^2. 
\end{eqnarray}
In the last case,
\begin{eqnarray}
&& \nonumber ( N_l(x)-N_l(y), N_l(x)-N_l(y) ) = \left( l\frac{x}{|x|}-l\frac{y}{|y|}, l\frac{x}{|x|}-l\frac{y}{|y|} \right) \\[2mm]
&&\quad 
\nonumber= 2l^2 -2(x,y)\frac{l^2}{|x||y|} \le |x-y|^2.
\end{eqnarray} 
	\end{proof}

 Now we fix $l>0$ (large), choose the functions $\bar{v}\in \mathcal{W}, \ \bar{\xi} \in L^2(0,T;L^2(\Gamma_C)^d)$ and  $\bar{\theta} \in \mathcal{W}_\Gamma$, and let $\bar{u}$ given by \eqref{hist} using $\bar{v}$. 
 Consider now the following two auxiliary problems.
 
 \begin{problem}\label{P_aux1}
 Find the velocity field $v\in \mathcal{W}$ such that 
\begin{align}
 &\label{aux_1} \langle v'(t),w\rangle_{V^*\times V} +\langle Av(t),w\rangle_{V^*\times V} + \langle Gu(t),w\rangle_{V^*\times V} \\[2mm] \nonumber
& \qquad \qquad + \int_{\Gamma_C} h_\nu (N_l\bar{u}(t))w_\nu \, d\Gamma   + \int_{\Gamma_C} h_\tau (N_l\bar{u}(t),N_l\bar{v}(t),M_l\bar{\theta}(t))\bar{\xi}(t)w_\tau\, d\Gamma \\[2mm]
&\qquad  =\langle f(t),w\rangle_{V^*\times V} \quad \textrm{for every} \ w\in V \ \textrm{a.e.} \ t\in (0,T),\nonumber \\[2mm]
&v(0)=v_0\nonumber.
 \end{align}
 \end{problem}

 \begin{problem} \label{P_aux2}
  Find the wear function $\theta\in \mathcal{W}_\Gamma$ such that
  \begin{align}
\label{aux_2} & \langle \theta'(t),\eta\rangle_{H^1(\Gamma_C)^*\times H^1(\Gamma_C)} + \kappa ( \nabla_\Gamma \theta(t),\nabla_\Gamma \eta )_{L^2(\Gamma_C)^d} \\[2mm]
 &\qquad =\int_{\Gamma_C} h_w (N_l\bar{u}(t),N_l\bar{v}(t))\eta \, d\Gamma \quad \textrm{for every} \ \eta \in H^1(\Gamma_C) \ \textrm{a.e.} \ t\in (0,T),\nonumber\\[2mm]
 & \theta(0)=\theta_0.\nonumber
 \end{align}
 \end{problem}
 
We note that by using the given functions and the truncations, the two problems
are uncoupled.

 \begin{lemma}\label{exist_1}
 	There exists a unique solution to Problem~\ref{P_aux1} 
 \end{lemma}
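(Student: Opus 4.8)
The plan is to recognize Problem~\ref{P_aux1} as a linear, first-order evolution problem for $v$ with a right-hand side that, thanks to the truncation operators $N_l$ and $M_l$ and the fixed data $\bar v,\bar\xi,\bar\theta$, is a \emph{given} element of $L^2(0,T;V^*)$; once this is in place, existence and uniqueness follow from the classical Lions--Magenes / Browder theory for evolution equations governed by a coercive linear operator. Concretely, I would first rewrite the integral terms: set
\[
\langle R(t),w\rangle_{V^*\times V}
=\int_{\Gamma_C} h_\nu(N_l\bar u(t))\,w_\nu\,d\Gamma
+\int_{\Gamma_C} h_\tau(N_l\bar u(t),N_l\bar v(t),M_l\bar\theta(t))\,\bar\xi(t)\cdot w_\tau\,d\Gamma,
\]
and check that $R\in L^2(0,T;V^*)$. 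For the first term, $|N_l\bar u(t)|\le l$ pointwise, so by (H5) $h_\nu(N_l\bar u(t))$ is bounded in $L^\infty(\Gamma_C)$ uniformly in $t$; continuity of $h_\nu$ and measurability of $\bar u$ give measurability in $t$, and the trace inequality controls $\|w_\nu\|_{L^2(\Gamma_C)}\le C\|w\|_V$, so this contributes a bounded (hence $L^2$-in-time) functional. For the second term, (H6) with the truncated arguments again gives $h_\tau(N_l\bar u(t),N_l\bar v(t),M_l\bar\theta(t))$ bounded in $L^\infty(\Gamma_C)$ uniformly in $t$, while $\bar\xi\in L^2(0,T;L^2(\Gamma_C)^d)$ by hypothesis; the product therefore lies in $L^2(0,T;L^2(\Gamma_C)^d)$, and paired against $w_\tau$ via the trace inequality yields $R\in L^2(0,T;V^*)$. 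Similarly, $Gu(t)$, where $u(t)=u_0+\int_0^t v(s)\,ds$, has to be moved to a zeroth-order term; since $\langle Gu(t),w\rangle$ is linear and bounded on $V$ with $\|Gu(t)\|_{V^*}\le C\|u(t)\|_V\le C(\|u_0\|_V+\int_0^t\|v(s)\|_V\,ds)$, it is a Volterra-type perturbation.

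The second step is to apply the standard existence theorem (e.g.\ \cite[Theorem~2.21 or the evolution-equation results]{MOSBOOK}, or Lions' theorem) to the equation
\[
v'(t)+Av(t)=f(t)-R(t)-Gu(t),\qquad v(0)=v_0,
\]
in the Gelfand triple $V\subset H\subset V^*$. The operator $A$ is linear, bounded from $V$ to $V^*$ by (H1), and coercive: by (H2)--(H3) and Korn's inequality (available because $\mu_{d-1}(\Gamma_D)>0$), $\langle Av,v\rangle=\int_\Omega a_{ijkl}\varepsilon_{kl}(v)\varepsilon_{ij}(v)\,dx\ge\alpha\|\varepsilon(v)\|_{L^2(\Omega)}^2\ge c\|v\|_V^2$. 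With $v_0\in H$ and right-hand side in $L^2(0,T;V^*)$, the theory gives a unique $v\in\mathcal W$. The Volterra term $Gu$ is handled either by absorbing it into a Banach fixed-point / Gronwall argument on short intervals and then extending to $[0,T]$, or by noting that the map $v\mapsto Gu$ with $u=u_0+\int_0^\cdot v$ is a compact (in particular bounded, causal) perturbation so that the a priori estimate still closes via Gronwall — giving both existence on all of $[0,T]$ and uniqueness.

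Uniqueness I would get directly: if $v_1,v_2$ are two solutions with the same data, then $z=v_1-v_2$ solves $z'+Az+G\big(\int_0^\cdot z\big)=0$, $z(0)=0$; testing with $z(t)$, integrating on $(0,t)$, using coercivity of $A$, positive semidefiniteness of $G$ (from (H3), $b_{ijkl}\xi_{ij}\xi_{kl}\ge0$) together with $\tfrac{d}{dt}\langle G\int_0^t z,\int_0^t z\rangle=2\langle G\int_0^t z,z(t)\rangle$, yields $\tfrac12\|z(t)\|_H^2+c\int_0^t\|z\|_V^2\le 0$, hence $z\equiv0$. The main obstacle — really the only non-routine point — is the careful verification that the boundary terms built from $h_\nu,h_\tau,\bar\xi$ genuinely define an element of $L^2(0,T;V^*)$; this is exactly where the truncation operators earn their keep, since without them the growth bounds (H5)--(H6) in $|u|,|v|,|\theta|$ would not be integrable against the (only $\mathcal W$- and $L^2$-regular) fixed functions $\bar u,\bar v,\bar\theta$. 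Everything else is an application of the standard linear parabolic-type existence theory together with a Gronwall estimate for the causal lower-order term $Gu$.
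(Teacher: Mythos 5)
Your proposal is correct. The paper offers no actual argument for this lemma --- it simply refers to \cite{MOSBOOK} --- and your reduction of Problem~\ref{P_aux1} to a linear, coercive evolution equation in the triple $V\subset H\subset V^*$ (with the truncated boundary terms verified to define a fixed right-hand side in $L^2(0,T;V^*)$, the Volterra term $Gu$ absorbed by Gronwall, and uniqueness via the energy identity using the symmetry and positive semidefiniteness of $G$) is exactly the standard argument that the cited reference supplies; in particular you correctly identified that $u$ in the term $Gu(t)$ is reconstructed from the unknown $v$ via \eqref{hist}, which is the only point where the problem fails to be a plain parabolic equation.
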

 \begin{proof} For the proof of Lemma we refer to \cite{MOSBOOK}. 
 \end{proof}

 \begin{lemma}\label{exist_2}
There exists a unique solution to Problem~\ref{P_aux2}.
 \end{lemma}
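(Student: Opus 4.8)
The plan is to show existence and uniqueness of a solution to Problem~\ref{P_aux2} by recognizing it as a standard linear parabolic problem on the Gelfand triple $H^1(\Gamma_C)\subset L^2(\Gamma_C)\subset H^1(\Gamma_C)^*$, with a right-hand side that is a fixed element of $L^2(0,T;L^2(\Gamma_C))$. First I would set $X = H^1(\Gamma_C)$, $Y = L^2(\Gamma_C)$, and note that since $\Gamma_C$ is a compact $C^2$ manifold with smooth boundary, the embedding $X\hookrightarrow Y$ is continuous, dense, and compact (Rellich--Kondrachov on manifolds with boundary, cf.\ \cite{aubin2}), so $(X,Y,X^*)$ is indeed an evolution triple. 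Define the bilinear form $a(\theta,\eta) = \kappa(\nabla_\Gamma\theta,\nabla_\Gamma\eta)_{L^2(\Gamma_C)^d}$, which is bounded on $X\times X$; it is not coercive on $X$ by itself (constants are in the kernel), but it satisfies a G\aa rding inequality $a(\eta,\eta) + \lambda\|\eta\|_Y^2 \ge \kappa\|\eta\|_X^2$ for a suitable $\lambda>0$, which is all that is needed for the standard linear theory.

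Next I would verify that the data are admissible: by (H4), $h_w$ has at most quadratic growth, but here its arguments are the \emph{truncated} functions $N_l\bar u(t)$ and $N_l\bar v(t)$, which are bounded pointwise by $l$; hence $|h_w(N_l\bar u(t,x),N_l\bar v(t,x))| \le C_w(1+2l^2)$ for a.e.\ $(t,x)$, so the map $t\mapsto h_w(N_l\bar u(t),N_l\bar v(t))$ belongs to $L^\infty(0,T;L^2(\Gamma_C)) \subset L^2(0,T;X^*)$. Together with $\theta_0\in L^2(\Gamma_C) = Y$, the hypotheses of the classical existence-uniqueness theorem for linear evolution equations of the form $\theta' + A_\Gamma\theta = F$, $\theta(0)=\theta_0$ (see, e.g., \cite[Ch.~23]{MOSBOOK} or the Lions--Magenes / Dautray--Lions theory) are met, yielding a unique $\theta\in\mathcal{W}_\Gamma$. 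The substitution $\tilde\theta = e^{-\lambda t}\theta$ converts the G\aa rding inequality into genuine coercivity if one prefers to invoke the coercive version of the theorem directly.

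For the proof I would either (i) cite the abstract linear parabolic result directly, as was done for Lemma~\ref{exist_1}, remarking that the operator associated with $a(\cdot,\cdot)$ is linear, bounded, and satisfies a G\aa rding inequality and that $h_w(N_l\bar u,N_l\bar v)\in L^2(0,T;L^2(\Gamma_C))$ because of the truncation; or (ii) give a short self-contained argument via Galerkin approximation: take a basis of $H^1(\Gamma_C)$ (e.g.\ eigenfunctions of $-\Delta_\Gamma$ with Neumann boundary condition, which exist and form an orthonormal basis of $L^2(\Gamma_C)$ by the spectral theorem for the Laplace--Beltrami operator on a compact manifold with boundary), derive the a~priori bound in $L^\infty(0,T;L^2(\Gamma_C))\cap L^2(0,T;H^1(\Gamma_C))$ by testing with $\theta^n(t)$ and using Gr\"onwall, bound $\theta'^n$ in $L^2(0,T;H^1(\Gamma_C)^*)$ from the equation, extract a weakly convergent subsequence, pass to the limit using linearity (no compactness needed since the problem is linear and the right-hand side is fixed), and verify the initial condition via the continuity $\mathcal{W}_\Gamma\hookrightarrow C([0,T];L^2(\Gamma_C))$. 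Uniqueness follows by subtracting two solutions, testing the difference with itself, and applying Gr\"onwall.

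The only genuine subtlety — and the step I would flag explicitly — is making sure the function-space machinery for $H^1$ on a manifold with boundary is correctly invoked: the compactness of $H^1(\Gamma_C)\hookrightarrow L^2(\Gamma_C)$ and the validity of the Gelfand-triple structure and of Green's formula on $\Gamma_C$, all of which rest on $\Gamma_C$ being a $C^2$ compact manifold with smooth boundary, as assumed in Section~\ref{Model}, and on the references \cite{aubin2, DzE07}. Everything else is the textbook linear parabolic theory, with the growth condition (H4) rendered harmless by the truncation operators $N_l$.
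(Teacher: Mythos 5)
Your argument is correct and matches the paper's approach: the paper's entire proof is a citation to the classical linear parabolic theory (Lions--Magenes), and your proposal simply spells out why that theory applies here (Gelfand triple on $\Gamma_C$, G\aa rding inequality for the Laplace--Beltrami form, and the right-hand side lying in $L^2(0,T;L^2(\Gamma_C))$ thanks to the truncations). Your optional Galerkin argument is just the standard proof of that cited result, so there is nothing genuinely different to compare.
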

\begin{proof} For the proof, we refer to classical results on parabolic problems, see, e.g., 
\cite{Lions}. 
\end{proof}
 
 In the next step, we introduce the following coupled, but still truncated problem.
 
 \begin{problem}\label{P_trunc}
 Find $v\in \mathcal{W}$  with $\xi\in L^2(0,T;L^2(\Gamma_C)^d)$ and $\theta \in \mathcal{W}_\Gamma$, 
 such that
 
 \begin{align}
&\nonumber  \langle v'(t),w\rangle_{V^*\times V} +\langle Av(t),w\rangle_{V^*\times V} + \langle Gu(t),w\rangle_{V^*\times V}  \\[2mm]
&\qquad \qquad \qquad \nonumber  + \int_{\Gamma_C} h_\nu (N_lu(t))w_\nu \, d\Gamma + \int_{\Gamma_C} h_\tau (N_lu(t),N_lv(t),M_l\theta(t))\xi(t)w_\tau\, d\Gamma  \\[2mm]
&\qquad \label{est_p1} = \langle f(t),w\rangle_{V^*\times V} \quad 
\textrm{for every} \ w\in V, \ \textrm{a.e.} \ t\in (0,T),
\end{align}

 \begin{align}
&\label{est_p2}  \xi(t)\in S_{\partial j_\tau}^2 (v_\tau(t))\quad \textrm{a.e.} \ t\in (0,T), \\[2mm]
&\nonumber   \langle \theta'(t),\eta\rangle_{Y^*\times Y} + \kappa \langle \nabla_\Gamma \theta(t),\nabla_\Gamma \eta\rangle_{L^2(\Gamma_C);\mathbb{R}^d} \\[2mm]
&\qquad  =\int_{\Gamma_C} h_w (N_lu(t),N_lv(t))\eta \, d\Gamma \quad 
 \label{est_p3} 
 \textrm{for every} \ \eta \in Y, \ \textrm{a.e.} \ t\in (0,T),\\[2mm]
 & v(0)=v_0, \ \theta(0)=\theta_0.
 \end{align}
 
 \end{problem}
 
We now show the existence of a solution to Problem~\ref{P_trunc} by using 
Lemmas~\ref{exist_1} and~\ref{exist_2} and the fixed point theorem, Theorem ~\ref{Kakutani}. 

In what follows, we check all the assumption of Theoremt~\ref{Kakutani}, and summarize the steps in the lemmas. 
First, we derive  the necessary a-priori estimates.

\begin{lemma}\label{radius}
Let $v$ and $\theta$ be the solutions of Problems~\ref{P_aux1} and~\ref{P_aux2}, respectively. Then,
the following estimates hold:
\begin{eqnarray}\label{rad_1}
&& \|v\|_{\mathcal{W}}^2\le  C_1\left(1+\|u_0\|_V^2 + \|v_0\|_H^2 + \|\bar{\xi}\|_{L^2(0,T;L^2(\Gamma_C)^d}^2\right), \\[2mm]
&& \|\theta\|_{\mathcal{W}_{\Gamma}}^2 \le  C\left(1+\|\theta_0\|_{L^2(\Gamma_C)}^2\right).\label{rad_2}
\end{eqnarray} 
Moreover, there exists $\xi \in L^2(0,T;L^2(\Gamma_C)^d)$ that satisfies 
\[
\xi(t)\in S_{\partial j_\tau}^2 (\overline{v}_\tau(t))\quad \textrm{a.e.} \ t\in (0,T),
\]
and the bound 
\begin{equation}
\|\xi\|_{L^2(0,T;L^2(\Gamma_C)^d)}^2 \le C_3. \label{rad_3}
\end{equation}
The constants $C_1,C_2,C_3>0$, depend only  $\Omega$, $T$, $f$, the constants present in (H1)-(H8), and $l$.
\end{lemma}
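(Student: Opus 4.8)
The plan is to derive the three bounds by testing each of the two auxiliary problems with suitable functions and using the coercivity of $A$ together with the growth conditions (H4)–(H8). I would begin with estimate \eqref{rad_1} for the velocity. Take $w=v(t)$ in the equation of Problem~\ref{P_aux1}; since $\langle v'(t),v(t)\rangle_{V^*\times V}=\frac12\frac{d}{dt}\|v(t)\|_H^2$ and, by (H3), $\langle Av(t),v(t)\rangle_{V^*\times V}\ge\alpha\|v(t)\|_V^2$, I obtain a differential inequality of the form
\[
\frac12\frac{d}{dt}\|v(t)\|_H^2+\alpha\|v(t)\|_V^2\le \langle f(t),v(t)\rangle_{V^*\times V}-\langle Gu(t),v(t)\rangle_{V^*\times V}-\int_{\Gamma_C}\!h_\nu(N_l\bar u)v_\nu\,d\Gamma-\int_{\Gamma_C}\!h_\tau(N_l\bar u,N_l\bar v,M_l\bar\theta)\bar\xi\, v_\tau\,d\Gamma.
\]
The $G$ term is controlled using $\|u(t)\|_V\le \|u_0\|_V+\int_0^t\|v(s)\|_V\,ds$ from \eqref{hist}, boundedness of $G$, and Young's inequality; the $f$ term by Young; the two boundary integrals by the trace inequality $\|w\|_{L^2(\Gamma)^d}\le C\|w\|_V$ together with (H5)–(H6) — here the crucial point is that the arguments of $h_\nu$ and $h_\tau$ are \emph{truncated}, so $|h_\nu(N_l\bar u)|\le C_\nu(1+l)$ and $|h_\tau(N_l\bar u,N_l\bar v,M_l\bar\theta)|\le C_\tau(1+3l)$ are simply constants depending on $l$, and the last integral is then bounded by $C(l)\|\bar\xi(t)\|_{L^2(\Gamma_C)^d}\|v(t)\|_V$. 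After absorbing all $\|v(t)\|_V$ factors into the coercivity term via Young's inequality with a small parameter, I integrate in $t$ and apply Gr\"onwall's lemma to get $\|v\|_{L^\infty(0,T;H)}^2+\|v\|_{L^2(0,T;V)}^2\le C_1(1+\|u_0\|_V^2+\|v_0\|_H^2+\|\bar\xi\|_{L^2(0,T;L^2(\Gamma_C)^d)}^2)$. The bound on $v'$ in $L^2(0,T;V^*)$ then follows by reading it off from the equation: $v'(t)=f(t)-Av(t)-Gu(t)-(\text{boundary terms})$, each summand already estimated, giving the full $\mathcal{W}$-norm bound \eqref{rad_1}.

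For \eqref{rad_2} I proceed analogously with Problem~\ref{P_aux2}: test with $\eta=\theta(t)$, use $\langle\theta'(t),\theta(t)\rangle=\frac12\frac{d}{dt}\|\theta(t)\|_{L^2(\Gamma_C)}^2$ and $\kappa\|\nabla_\Gamma\theta(t)\|_{L^2(\Gamma_C)^d}^2\ge 0$, and bound the right-hand side $\int_{\Gamma_C}h_w(N_l\bar u,N_l\bar v)\theta\,d\Gamma$ using (H4): again the truncation makes $|h_w(N_l\bar u,N_l\bar v)|\le C_w(1+2l^2)$ a constant, so the integral is $\le C(l)\|\theta(t)\|_{L^2(\Gamma_C)}$. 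Young and Gr\"onwall give $\|\theta\|_{L^\infty(0,T;L^2(\Gamma_C))}^2+\kappa\|\nabla_\Gamma\theta\|_{L^2(0,T;L^2(\Gamma_C)^d)}^2\le C(1+\|\theta_0\|_{L^2(\Gamma_C)}^2)$, and reading $\theta'$ off the equation (using $\kappa>0$ and continuity of $\eta\mapsto(\nabla_\Gamma\theta,\nabla_\Gamma\eta)$) yields the $\mathcal{W}_\Gamma$ bound \eqref{rad_2}. Note this estimate is \emph{independent} of $\bar\xi$ and of $l$ only through the additive constant, consistent with the statement.

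For \eqref{rad_3}, the existence of a measurable selection $\xi(t)\in S^2_{\partial j_\tau}(\bar v_\tau(t))$ follows from the measurable selection theorem applied to the multifunction $t\mapsto\{\,\zeta\in L^2(\Gamma_C)^d:\zeta(x)\in\partial j_\tau(x,\bar v_\tau(x,t))\ \text{a.e.}\,\}$, which has nonempty closed values and is measurable because $\partial j_\tau$ has a closed graph in its second variable and $\bar v_\tau$ is measurable; the bound $\|\xi\|_{L^2(0,T;L^2(\Gamma_C)^d)}^2\le C_3$ is then immediate from (H8), since $\|\partial j_\tau(x,\xi)\|\le c_{1\tau}$ uniformly, giving $C_3=c_{1\tau}^2\,\mu_{d-1}(\Gamma_C)\,T$. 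The main obstacle, if any, is purely bookkeeping: one must make sure that when the boundary integrals are estimated, the $\|v(t)\|_V$-factors coming from the trace inequality are all absorbed into the single coercive term $\alpha\|v(t)\|_V^2$ — this is where coercivity of $\calA$ (assumption (H3)) is used essentially, and it is the reason, remarked after (H3), that the viscosity term cannot be dispensed with. Everything else is a routine Gr\"onwall argument, and the presence of the truncation operators $N_l,M_l$ (Lipschitz with constant $1$ by Lemma~\ref{Liptrunc}, hence bounded by $l$ in norm) is precisely what turns the nonlinear terms into harmless bounded forcing terms at this stage.
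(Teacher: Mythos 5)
Your proposal is correct and follows essentially the same route as the paper: test Problem~\ref{P_aux1} with $w=v(t)$ and Problem~\ref{P_aux2} with $\eta=\theta(t)$, use the coercivity of $A$ from (H3), exploit that the truncations render $h_\nu$, $h_\tau$, $h_w$ bounded (by constants depending on $l$) so the boundary terms become harmless, apply Young and Gronwall, and then read the bounds on $v'$ and $\theta'$ off the equations. The only cosmetic differences are that the paper absorbs the elastic term via $\langle Gu(t),v(t)\rangle=\tfrac12\tfrac{d}{dt}\langle Gu(t),u(t)\rangle$ and nonnegativity of $G$ instead of your estimate through \eqref{hist}, and it obtains the selection $\xi$ from the Aubin--Clarke theorem for the integral functional $J$ rather than from a direct measurable-selection argument; both variants are equivalent in substance.
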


\begin{proof} 
We choose $w=v(t)$ in (\ref{aux_1}) and then it follows from (H3), (H5), (H6) and the Cauchy inequality 
with $\varepsilon>0$ that for $t\in (0,T)$, 
\begin{eqnarray}
\nonumber &&\frac{1}{2}\frac{d}{dt} \|v(t)\|_H^2 +\alpha \|v(t)\|_V^2 +\frac{1}{2}\frac{d}{dt} 
\langle Gu(t),u(t)\rangle_{V^*\times V}  \\[2mm]
\nonumber
 &&\qquad \le C\left(1+{\varepsilon}\|v(t)\|_V^2 + C(\varepsilon)\|v(t)\|_{V^*}^2 +  
 C(\varepsilon)\|\bar{\xi}(t)\|_{L^2(\Gamma_C)^d}^2\right).\label{est_4}
\end{eqnarray}
Integrating (\ref{est_4}) over $(0,t)$ for $t\in (0,T)$ and choosing appropriate value of $\varepsilon$ yields 
\begin{eqnarray} &&\nonumber\|v\|_{L^\infty(0,T;H)}^2 + \alpha\|v(t)\|_{L^2(0,T;V)}^2 \\[2mm]
&&\qquad \qquad \le C\left(1+\|u_0\|_V^2 + \|v_0\|_H^2 + \|\bar{\xi}\|_{L^2(0,T;L^2(\Gamma_C)^d)}^2\right).\label{est_v}
\end{eqnarray}
Next, we choose $\eta=\theta(t)$ in (\ref{aux_2}), and then  it follows from (H4) that
\begin{eqnarray}
\frac{1}{2}\frac{d}{dt} \|\theta(t)\|_{L^2(\Gamma)}^2 + \kappa \|\nabla_\Gamma \theta(t)\|_{L^2(\Gamma_C)^d}^2 \le C\left(1+\|\theta(t)\|_{L^2(\Gamma_C)^2}^2\right).\label{est_5}
\end{eqnarray}
Again, integrating (\ref{est_5}) over $(0,t)$ for $t\in (0,T)$ we obtain
\begin{equation}\label{est_1}
\|\theta(t)\|_{L^2(\Gamma_C)}^2 + \kappa\|\nabla_\Gamma \theta\|_{L^2(0,t;L^2(\Gamma_C)^d)}^2 \le C\left(1 + \|\theta_0\|_{L^2(\Gamma_C)}^2 + \|\theta\|_{L^2(0,t;L^2(\Gamma_C))}^2\right).
\end{equation}
Using the Gronwall inequality we get for $t\in (0,T)$
\begin{equation}\label{est_2}
\|\theta(t)\|_{L^2(\Gamma_C)}^2 \le C\left(1+\|\theta_0\|_{L^2(\Gamma_C)}^2\right).
\end{equation}
Combining (\ref{est_1}) and (\ref{est_2}) it follows that
\begin{equation}\label{est_theta1}
\|\nabla_\Gamma \theta\|_{L^2(0,T;L^2(\Gamma_C)^d)}\le C\left(1+\|\theta_0\|_{L^2(\Gamma_C)}^2\right), 
\end{equation}
hence from \eqref{est_2} and \eqref{est_theta1} we conclude that
\begin{equation}
\|\theta\|_{L^2(0,T;H^1(\Gamma_C))}^2\le C\left(1+\|\theta_0\|_{L^2(\Gamma_C)}^2\right).\label{est_t2}
\end{equation}
Straightforward manipulations  using the estimates \eqref{est_v} and \eqref{est_t2} and \eqref{aux_1} and \eqref{aux_2} 
lead to the following bounds on $v'$ and $\theta'$, 
\begin{eqnarray}
&& \|v'\|_{L^2(0,T;V^*)}^2 \le C\left(1+\|u_0\|_V^2 + \|v_0\|_H^2 + \|\bar{\xi}\|_{L^2(0,T;L^2(\Gamma_C)^d)}^2\right), \label{est_v'}\\[2mm]
&& \|\theta'\|_{L^2(0,T;H^1(\Omega)^*)}^2 \le C\left(1+\|\theta_0\|_{L^2(\Gamma_C)}^2\right).\label{est_t'}
\end{eqnarray}

Now, we need to show the existence of $\xi \in L^2((0,T)\times \Gamma_C)^d$ such that $\xi(t) \in S^2_{\partial j_\tau}
(v_\tau(t))$ for a.e. $t\in (0,T)$. To show this it is sufficient to prove the existence of a measurable 
selection, since the integrability, as well as the bound \eqref{rad_3}  follow from (H8). However, the existence of a 
measurable selection of the subdifferential follows from  \cite[Theorem 5.6.39]{MIGDENKBOOK}, as the 
Clarke subdifferential  of the locally Lipschitz integral functional
\[
J: L^2(0,T;L^2(\Gamma_C)^d)\to \mathbb{R},
\]
defined by 
\[
J(v) = \int_0^T \int_{\Gamma_C} j_\tau(v(x,t))\, d\Gamma dt,
\]
is nonempty and its elements are measurable on the one-hand, and on the other-hand they 
are selections of the multifunction $\partial j(v(x,t))$, see also \cite{GASKAL}.

Therefore,  \eqref{est_v}, \eqref{est_t2}--\eqref{est_t'} and (H8) imply that, for some positive
constants $C_1,C_2$ and $C_3$, the following estimates:
\begin{eqnarray*}\label{est_3}
&& \|v\|_{\mathcal{W}}^2\le  C_1\left(1+\|u_0\|_V^2 + \|v_0\|_H^2 + \|\bar{\xi}\|_{L^2(0,T;L^2(\Gamma_C)^d)}^2\right), \\[2mm]
&& \|\theta\|_{\mathcal{W}_{\Gamma}}^2 \le  C_2\left(1+\|\theta_0\|_{L^2(\Gamma_C)}^2\right), \\[2mm]
&& \|\xi\|_{L^2(0,T;L^2(\Gamma_C)^d)}^2\le C_3.
\end{eqnarray*}
This completes the proof of the Lemma~\ref{radius}. 
\end{proof}

Next, we define the space $Z= \mathcal{W}\times\mathcal{W}_\Gamma\times L^2(0,T;L^2(\Gamma_C)^d)$ 
and consider the solution operator $\Lambda\colon Z \to 2^Z$, which assigns to a triple 
$(\bar{v},\bar{\theta},\bar{\xi})$ a triple $(v,\theta,\xi)$, where $v$ and $\theta$ are the solutions of Problems~\ref{P_aux1} and \ref{P_aux2}, respectively, and $\xi$ is a $L^2$-measurable selection out
of $\partial j(\overline{v}_\tau)$. We have the following lemma.
\begin{lemma}\label{balls}
There exist positive constants $R_1,R_2$ and $R_3$ such that 
 $\Lambda(B)\subset 2^{B}$, where the set $B=B(R_1,R_2,R_3)$ is given by
\[B(R_1,R_2,R_3)= \{(v,\theta,\xi) \in Z \mid \|v\|_\mathcal{W}\le R_1, 
\|\theta\|_{\mathcal{W}_\Gamma}\le R_2, \|\xi\|_{L^2(0,T;L^2(\Gamma_C)^d)}\le R_3\}.\]
\end{lemma}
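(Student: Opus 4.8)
The plan is to show that the solution operator $\Lambda$ maps some closed ball $B(R_1,R_2,R_3)$ into itself, exploiting the a-priori estimates of Lemma~\ref{radius} together with the fact that the data constants $C_1,C_2,C_3$ do \emph{not} depend on the radii of the ball containing $(\bar v,\bar\theta,\bar\xi)$ --- this is precisely why the truncation operators $N_l,M_l$ were introduced. First I would observe that in Problems~\ref{P_aux1} and~\ref{P_aux2}, the arguments $\bar u,\bar v,\bar\theta$ enter only through $N_l\bar u$, $N_l\bar v$, $M_l\bar\theta$, which are pointwise bounded by $l$ (Lemma~\ref{Liptrunc}); hence the right-hand-side terms $h_\nu(N_l\bar u)$, $h_\tau(N_l\bar u,N_l\bar v,M_l\bar\theta)$, $h_w(N_l\bar u,N_l\bar v)$ are bounded in the relevant $L^2$-spaces by a constant depending only on $l$, the growth constants in (H4)--(H6), and $|\Gamma_C|$, and in particular \emph{not} on $\|\bar v\|_{\mathcal W}$ or $\|\bar\theta\|_{\mathcal W_\Gamma}$.

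Next, I would re-run the energy estimates of Lemma~\ref{radius} with these truncated right-hand sides. For the $v$-equation, testing with $w=v(t)$ and using (H3), (H8), and Gronwall gives $\|v\|_{\mathcal W}^2\le C_1'(1+\|u_0\|_V^2+\|v_0\|_H^2)$, where $C_1'$ absorbs the $\bar\xi$-term because $\|\bar\xi\|_{L^2(0,T;L^2(\Gamma_C)^d)}\le c_{1\tau}\sqrt{T\,|\Gamma_C|}$ is itself an $l$-free (indeed data-free) bound coming straight from (H8) --- not from membership in the ball. Similarly the $\theta$-equation yields $\|\theta\|_{\mathcal W_\Gamma}^2\le C_2(1+\|\theta_0\|_{L^2(\Gamma_C)}^2)$ and the measurable selection $\xi$ of $\partial j_\tau(\bar v_\tau)$ satisfies $\|\xi\|_{L^2(0,T;L^2(\Gamma_C)^d)}^2\le C_3$ by (H8). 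One then simply \emph{defines}
\[
R_1 := \big(C_1(1+\|u_0\|_V^2+\|v_0\|_H^2 + c_{1\tau}^2 T|\Gamma_C|)\big)^{1/2},\quad
R_2 := \big(C_2(1+\|\theta_0\|_{L^2(\Gamma_C)}^2)\big)^{1/2},\quad
R_3 := C_3^{1/2},
\]
so that for every $(\bar v,\bar\theta,\bar\xi)\in Z$ --- a fortiori for every element of $B(R_1,R_2,R_3)$ --- any $(v,\theta,\xi)\in\Lambda(\bar v,\bar\theta,\bar\xi)$ satisfies $\|v\|_{\mathcal W}\le R_1$, $\|\theta\|_{\mathcal W_\Gamma}\le R_2$, $\|\xi\|_{L^2(0,T;L^2(\Gamma_C)^d)}\le R_3$, i.e. $\Lambda(B)\subset 2^B$.

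The only subtlety --- and the step I would flag as requiring care rather than being an obstacle per se --- is the treatment of the selection $\xi$: one must verify that $\Lambda$ indeed has nonempty values, which reduces to the existence of an $L^2$-measurable selection of $\partial j_\tau(\bar v_\tau(\cdot,\cdot))$. This was already established in the proof of Lemma~\ref{radius} via \cite[Theorem 5.6.39]{MIGDENKBOOK} applied to the integral functional $J$, so I would just cite it here. A second minor point is that the bound on $\bar\xi$ used in estimate \eqref{rad_1} must be understood via (H8) (any selection is bounded by $c_{1\tau}$ pointwise), rather than via the radius $R_3$; keeping the constants $l$-dependent but radius-independent throughout is what makes the self-map property work, and this is exactly the mechanism that later lets the Kakutani--Ky Fan--Glicksberg theorem be applied on the compact convex set $B$ (compactness in the weak or weak-$*$ sense following from Aubin--Lions-type embeddings, to be handled in the subsequent lemma).
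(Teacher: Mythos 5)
Your overall strategy is the same as the paper's: the proof of Lemma~\ref{balls} amounts to reading off the radii from the a priori estimates of Lemma~\ref{radius}, and your choices of $R_1,R_2,R_3$ match the paper's (the paper even defines $R_1$ in terms of $R_3$, which is precisely the point at issue below). However, one step of your justification is wrong as stated. You claim that $\|\bar{\xi}\|_{L^2(0,T;L^2(\Gamma_C)^d)}\le c_{1\tau}\sqrt{T|\Gamma_C|}$ ``comes straight from (H8), not from membership in the ball,'' and you then conclude that the bounds on $(v,\theta,\xi)$ hold for \emph{every} $(\bar{v},\bar{\theta},\bar{\xi})\in Z$. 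That is false: in the definition of $\Lambda$ the input $\bar{\xi}$ is an \emph{arbitrary} element of $L^2(0,T;L^2(\Gamma_C)^d)$; it is not required to be a selection of $\partial j_\tau(\bar{v}_\tau)$, so (H8) gives no control on it. Only the \emph{output} $\xi\in S^2_{\partial j_\tau}(\bar{v}_\tau)$ is bounded pointwise by $c_{1\tau}$. Since the estimate \eqref{rad_1} for $\|v\|_{\mathcal{W}}$ genuinely depends on $\|\bar{\xi}\|$ (the term $\int_{\Gamma_C}h_\tau(\cdot)\bar{\xi}w_\tau\,d\Gamma$ is not damped by the truncations), the self-map property cannot hold on all of $Z$; if it did, there would be no need to carry the $\xi$-component in $Z$ at all.

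The fix is exactly the mechanism you hint at but then disavow: the radii must be chosen in the correct order. First set $R_3$ from \eqref{rad_3} --- this is where (H8) enters, applied to the output selection; then, for inputs with $\|\bar{\xi}\|\le R_3$, estimate \eqref{rad_1} gives $\|v\|_{\mathcal{W}}^2\le C_1\left(1+\|u_0\|_V^2+\|v_0\|_H^2+R_3^2\right)$, which determines $R_1$; the bound for $R_2$ is independent of the input. Your displayed formula for $R_1$ happens to encode this correctly (the term $c_{1\tau}^2T|\Gamma_C|$ plays the role of $R_3^2$), so the conclusion $\Lambda(B)\subset 2^{B}$ survives --- but the justification should invoke membership of $\bar{\xi}$ in the ball of radius $R_3$, not (H8) applied to the input. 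The rest of your remarks (the role of the truncations in making $C_1,C_2,C_3$ independent of the input radii, and the nonemptiness of the values via the measurable selection already obtained in Lemma~\ref{radius}) are consistent with the paper.
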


\begin{proof}
We use the estimates above and choose 
\begin{equation}
R_3= C_3, R_2=C_2\left(1+\|\theta_0\|_{L^2(\Gamma_C)}^2\right),\;  R_1=C_1
\left(1+\|u_0\|_V^2 + \|v_0\|_H^2 + R_3\right).
\end{equation}
Now, the assertion of the Lemma  follows  from Lemma~\ref{radius}.
\end{proof}
 
\begin{lemma}\label{convex}
$\Lambda$ has non-empty and convex values.
\end{lemma}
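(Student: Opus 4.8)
The plan is to verify the two requirements separately. For non-emptiness, I would note that given any triple $(\bar v,\bar\theta,\bar\xi)\in Z$, Lemma~\ref{exist_1} produces a (unique) solution $v\in\mathcal{W}$ of Problem~\ref{P_aux1} and Lemma~\ref{exist_2} produces a (unique) solution $\theta\in\mathcal{W}_\Gamma$ of Problem~\ref{P_aux2}; these are well defined because the truncated data $N_l\bar u, N_l\bar v, M_l\bar\theta$ are bounded (Lemma~\ref{Liptrunc}) and $\bar\xi\in L^2(0,T;L^2(\Gamma_C)^d)$ is a legitimate right-hand side. It remains to produce at least one $\xi$ with $\xi(t)\in S^2_{\partial j_\tau}(\overline v_\tau(t))$ a.e.; this is exactly the measurable-selection argument already carried out in the proof of Lemma~\ref{radius}, invoking \cite[Theorem 5.6.39]{MIGDENKBOOK} for the Clarke subdifferential of the integral functional $J$, together with (H8) which guarantees $\xi\in L^2(0,T;L^2(\Gamma_C)^d)$. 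Hence $\Lambda(\bar v,\bar\theta,\bar\xi)$ contains at least the triple $(v,\theta,\xi)$, so it is non-empty.

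For convexity, I would observe that the $v$-component and the $\theta$-component of $\Lambda(\bar v,\bar\theta,\bar\xi)$ are singletons (uniqueness in Lemmas~\ref{exist_1} and~\ref{exist_2}), so convexity of the value set reduces entirely to convexity of the set of admissible selections
\[
\Xi(\bar v) := \{\xi\in L^2(0,T;L^2(\Gamma_C)^d)\mid \xi(t)\in S^2_{\partial j_\tau}(\overline v_\tau(t))\ \text{a.e.}\ t\in(0,T)\}.
\]
Concretely, if $(v,\theta,\xi_1)$ and $(v,\theta,\xi_2)$ both lie in $\Lambda(\bar v,\bar\theta,\bar\xi)$ and $\lambda\in[0,1]$, then $\lambda(v,\theta,\xi_1)+(1-\lambda)(v,\theta,\xi_2)=(v,\theta,\lambda\xi_1+(1-\lambda)\xi_2)$, and I must check $\lambda\xi_1+(1-\lambda)\xi_2\in\Xi(\bar v)$. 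The key point is that for a locally Lipschitz $j_\tau(x,\cdot)$ the Clarke subdifferential $\partial j_\tau(x,\xi)$ is, by definition, a \emph{convex} subset of $\mathbb{R}^d$; hence for a.e.\ $(x,t)$ we have $\lambda\xi_1(x,t)+(1-\lambda)\xi_2(x,t)\in\partial j_\tau(x,\overline v_\tau(x,t))$, and measurability of the convex combination is immediate. Thus $\Xi(\bar v)$ is convex, and therefore $\Lambda$ has convex values.

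I expect the only subtle point to be bookkeeping: making sure that the $v$- and $\theta$-components are genuinely independent of the choice of $\xi$ in the image triple (they are, since Problems~\ref{P_aux1} and~\ref{P_aux2} do not depend on $\xi$, only on the given data $\bar v,\bar\theta$), so that convexity really does localize to the selection set. No new estimates are needed here; everything rests on the uniqueness already established and on the elementary fact that Clarke subdifferentials are convex-valued by construction. I would therefore keep the proof short, citing Lemmas~\ref{exist_1}, \ref{exist_2} and the selection argument from Lemma~\ref{radius} for non-emptiness, and the convexity of $\partial j_\tau$ for the convexity of the values.
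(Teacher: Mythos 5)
Your argument is correct and is exactly the paper's proof, merely written out in more detail: the paper also cites the existence and uniqueness of $v$ and $\theta$ from Lemmas~\ref{exist_1} and~\ref{exist_2}, the existence of a selection $\xi$ from Lemma~\ref{radius}, and the convexity built into the definition of the Clarke subdifferential. Your explicit reduction of convexity to the selection set $\Xi(\bar v)$ is a helpful elaboration but not a different route.
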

\begin{proof}
The result follows from the convexity in the definition of the Clarke subdifferential, 
see eg. \cite{C}, the existence of $\xi$ given in Lemma \ref{radius}, and the existence and 
uniqueness of $v$ and $\theta$ established in Lemmas \ref{exist_1} and \ref{exist_2}.
\end{proof}

\begin{lemma}\label{graph}
$Gr(\Lambda)$ is sequentially closed in $(w-Z)\times(w-Z)$ topology.
\end{lemma}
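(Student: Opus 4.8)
\textbf{Proof plan for Lemma~\ref{graph}.}
The goal is to show that if $(\bar v_n,\bar\theta_n,\bar\xi_n)\rightharpoonup(\bar v,\bar\theta,\bar\xi)$ in $Z$ and $(v_n,\theta_n,\xi_n)\in\Lambda(\bar v_n,\bar\theta_n,\bar\xi_n)$ with $(v_n,\theta_n,\xi_n)\rightharpoonup(v,\theta,\xi)$ in $Z$, then $(v,\theta,\xi)\in\Lambda(\bar v,\bar\theta,\bar\xi)$. First I would exploit Lemma~\ref{balls}: the whole graph is contained in $B(R_1,R_2,R_3)\times B(R_1,R_2,R_3)$, so all sequences are bounded in $\mathcal{W}$, $\mathcal{W}_\Gamma$ and $L^2(0,T;L^2(\Gamma_C)^d)$. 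By the Aubin--Lions--Simon compactness lemma, $v_n\to v$ strongly in $L^2(0,T;H)$ (and in $L^2(0,T;L^2(\Gamma)^d)$ via the compact trace $\gamma$), $\bar v_n\to\bar v$ strongly in the same spaces, and $\theta_n\to\theta$, $\bar\theta_n\to\bar\theta$ strongly in $L^2(0,T;L^2(\Gamma_C))$; passing to a subsequence we also get pointwise a.e.\ convergence on $\Gamma_{CT}$. From $\bar u_n(t)=u_0+\int_0^t\bar v_n(s)\,ds$ we get $\bar u_n\to\bar u$ strongly in $C(0,T;H)$ and, after the trace, a.e.\ on $\Gamma_{CT}$.

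Next I would pass to the limit in each of the three defining relations. For the wear equation~\eqref{aux_2}: the terms $\langle\theta_n',\eta\rangle$ and $\kappa(\nabla_\Gamma\theta_n,\nabla_\Gamma\eta)$ pass to the limit by weak convergence in $\mathcal{W}_\Gamma$ (in particular $\theta_n'\rightharpoonup\theta'$ in $L^2(0,T;H^1(\Gamma_C)^*)$), while the right-hand side $h_w(N_l\bar u_n,N_l\bar v_n)$ converges: $N_l\bar u_n\to N_l\bar u$ and $N_l\bar v_n\to N_l\bar v$ strongly in $L^2(0,T;L^2(\Gamma_C)^d)$ by Lemma~\ref{Liptrunc}, hence a.e.\ along a subsequence, so continuity of $h_w$ (H4) plus the uniform bound $|h_w|\le C_w(1+l^2+l^2)$ and Lebesgue dominated convergence give strong $L^2(\Gamma_{CT})$ convergence of the integrand; thus $\theta$ solves Problem~\ref{P_aux2} with data $\bar v$, i.e.\ $\theta=$ the (unique) solution for the limit. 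For the momentum equation~\eqref{aux_1}: $\langle v_n',w\rangle$, $\langle Av_n,w\rangle$ pass by weak convergence, $\langle Gu_n,w\rangle$ passes because $u_n\to u$ strongly in $L^2(0,T;V^*)$ suffices (or simply $u_n\rightharpoonup u$ in $L^2(0,T;V)$ since $G$ is linear bounded); the normal compliance term $\int_{\Gamma_C}h_\nu(N_l\bar u_n)w_\nu$ passes by (H5), the truncation bound and dominated convergence exactly as above; and the friction term $\int_{\Gamma_C}h_\tau(N_l\bar u_n,N_l\bar v_n,M_l\bar\theta_n)\bar\xi_n w_\tau$ is handled by writing it as a pairing of $\bar\xi_n\rightharpoonup\bar\xi$ weakly in $L^2(\Gamma_{CT})^d$ against $h_\tau(N_l\bar u_n,N_l\bar v_n,M_l\bar\theta_n)w_\tau\to h_\tau(N_l\bar u,N_l\bar v,M_l\bar\theta)w_\tau$ strongly in $L^2(\Gamma_{CT})$ (continuity (H6), truncation bounds, dominated convergence). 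Hence $v$ solves Problem~\ref{P_aux1} with the limiting data, so $v$ equals its unique solution.

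Finally, for the selection relation $\xi_n(t)\in S^2_{\partial j_\tau}(\bar v_{n\tau}(t))$: I would use that the graph of the Clarke subdifferential $\partial j_\tau(x,\cdot)$ is closed (upper semicontinuous with closed convex values, (H7)--(H8)) together with the strong convergence $\bar v_{n\tau}\to\bar v_\tau$ in $L^2(\Gamma_{CT})^d$ and the weak convergence $\xi_n\rightharpoonup\xi$ in $L^2(\Gamma_{CT})^d$. The standard tool is Aubin--Clarke / the convergence theorem for subdifferentials of integral functionals (cf.\ \cite[Theorem 5.6.39]{MIGDENKBOOK}, \cite{GASKAL}): since $\xi_n\in\partial J(\bar v_n)$ for the integral functional $J$ defined above, $J$ is locally Lipschitz and $\partial J$ is sequentially closed from (strong domain)$\times$(weak range) into itself, we conclude $\xi\in\partial J(\bar v)$, which yields $\xi(t)\in S^2_{\partial j_\tau}(\bar v_\tau(t))$ a.e. Combining the three limits, $(v,\theta,\xi)\in\Lambda(\bar v,\bar\theta,\bar\xi)$, so $Gr(\Lambda)$ is sequentially closed in $(w\text{-}Z)\times(w\text{-}Z)$. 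The main obstacle is the friction term: it is genuinely nonlinear in the weakly-convergent variable $\bar\xi_n$, so one must carefully split it into a weak$\times$strong product and verify the strong $L^2$ convergence of the coefficient $h_\tau(\cdots)w_\tau$; likewise the subdifferential closedness step needs the a.e.\ (equivalently strong $L^2$) convergence of $\bar v_{n\tau}$, which is why the compact trace and the Aubin--Lions argument are essential.
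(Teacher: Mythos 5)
Your proposal is correct and follows essentially the same route as the paper: Aubin--Lions compactness plus the compact trace to upgrade the weak convergences to strong convergence on $\Gamma_C$, dominated convergence with the truncation bounds and continuity of $h_\nu, h_\tau, h_w$ for the nonlinear boundary terms, a weak--times--strong splitting of the friction integral, and a subdifferential convergence theorem (the paper invokes Aubin--Cellina pointwise where you route through the integral functional $\partial J$, but this is the same standard argument).
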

\begin{proof}
We choose three sequences such that $\bar{v}_n\to \bar{v}$ weakly in 
$\mathcal{W}$, $\bar{\theta}_n\to \bar{\theta}$ 
weakly in $\mathcal{W}_\Gamma$ and $\bar{\xi}_n \to \bar{\xi}$ weakly in $L^2(0,T;L^2(\Gamma_C)^d)$. 
Define $v_n, \theta_n$ and $\xi_n$ as, respectively, the solutions of Problems \ref{P_aux1} 
and \ref{P_aux2} corresponding to $\bar{v}_n, \bar{\theta}_n, \bar{\xi}_n$, and the $L^2$ 
selection of $\partial j_\tau(\bar{v}_{n\tau}(x,t))$. Assume that $v_n\to v$ weakly in 
$\mathcal{W}$, $\theta_n\to \theta$ weakly in $\mathcal{W}_\Gamma$ and $\xi_n \to \xi$ 
weakly in $L^2(0,T;L^2(\Gamma_C)^d)$. We need to show that 
$v$ and $\theta$  are the solutions of Problems \ref{P_aux1} and \ref{P_aux2} that 
correspond to $\bar{v}, \bar{\theta}$ and $\bar{\xi}$, and that $\xi$ is the $L^2$ 
selection of $\partial j_\tau(\overline{v}_{\tau}(x,t))$.

First observe that the compactness of the embedding $i: V\to H^{1-\delta}(\Omega)^d$ together 
with the Aubin--Lions lemma imply that the tangential components of the traces satisfy
\[
\overline{v}_{n\tau} \to \overline{v}_\tau \ \textrm{strongly in}\ L^2(0,T;L^2(\Gamma_C)^d).
\] 
Since $\xi_n\to \xi$ weakly in $L^2(0,T;L^2(\Gamma_{C})^d)$ and $\xi_n$ is a selection out of $\partial j_\tau(\overline{v}_{n\tau}(x,t))$, a standard argument based on the Aubin--Cellina convergence 
theorem, \cite[Theorem 7.2.2]{Frankowska}, implies that $\xi$ is a selection out of  
$\partial j_\tau(\overline{v}_{\tau}(x,t))$. 

To show that  
$v$ and $\theta$  are the solutions of Problems \ref{P_aux1} and \ref{P_aux2}
corresponding to $\bar{v}, \bar{\theta}$ and $\bar{\eta}$, we need to write \eqref{aux_1} 
and \eqref{aux_2} for $v_n$ and $\theta_n$, and then pass to the limit $n \to \infty$. 
It is clear that $\theta_n(0) \to \theta(0)$ weakly in $L^2(\Gamma_C)$ and $v_n(0) \to v(0)$ 
weakly in $H$, which implies that $\theta$ and $v$ satisfy the same initial conditions as 
$\theta_n, v_n$. Moreover, the following hold,
\begin{align*}
&\langle v_n'(t),w\rangle_{V^*\times V} +\langle Av_n(t),w\rangle_{V^*\times V} + \langle Gu_n(t),w\rangle_{V^*\times V} \\[2mm] \nonumber
& \qquad \qquad + \int_{\Gamma_C} h_\nu (N_l\bar{u}_n(t))w_\nu \, d\Gamma   + \int_{\Gamma_C} h_\tau (N_l\bar{u}_n(t),N_l\bar{v}_n(t),M_l\bar{\theta}_n(t))\bar{\xi}_n(t)w_\tau\, d\Gamma \\[2mm]
&\qquad  =\langle f(t),w\rangle_{V^*\times V} \quad  \forall w\in V \ \textrm{a.e.} \ t\in (0,T),\nonumber \\[2mm]
 & \langle \theta_n'(t),\eta\rangle_{H^1(\Gamma_C)^*\times H^1(\Gamma_C)} + \kappa ( \nabla_\Gamma \theta_n(t),\nabla_\Gamma \eta )_{L^2(\Gamma_C)^d} \\[2mm]
&\qquad =\int_{\Gamma_C} h_w (N_l\bar{u}_n(t),N_l\bar{v}_n(t))\eta \, d\Gamma \quad  \forall  \eta \in H^1(\Gamma_C) \ \textrm{a.e.} \ t\in (0,T).\nonumber
\end{align*}
To show the weak sequential closedness of $Gr(\Lambda)$, we show the convergence of all the terms. 
The Aubin--Lions lemma implies  
 \begin{align}
 & v_n\to v\  \textrm{strongly in}  \ L^2(0,T;H) \; \textrm{ and }\;  \theta_n\to \theta \ \textrm{strongly in}\ L^2(0,T;L^2(\Gamma_C)),\label{est_6}\\
 &\overline{v}_{n} \to \overline{v} \; \textrm{ strongly in }\; L^2(0,T;L^2(\Gamma_C)^d), \overline{u}_{n} \to \overline{u} \ \textrm{strongly in}\ L^2(0,T;L^2(\Gamma_C)^d),\label{conv1}\\
 &\bar{\theta}_n\to \bar{\theta} \ \textrm{strongly in}\ L^2(0,T;L^2(\Gamma_C)).\label{conv2}
 \end{align}
 \noindent
Now, Lemma~\ref{balls}, (\ref{est_3}) and (\ref{est_6}), and the linearity of the duality pairings and the linearity and boundedness of operators $A$ and $G$, we obtain  that
\begin{eqnarray*}
&& \int_0^T \langle v_n'(t),w(t)\rangle_{V^*\times V} \, dt\to  \int_0^T\langle v'(t),w(t)\rangle_{V^*\times V}\, dt, \\
&&  \int_0^T\langle Av_n(t),w(t)\rangle_{V^*\times V} \, dt\to  \int_0^T\langle Av(t),w(t)\rangle_{V^*\times V}\, dt, \\
&&  \int_0^T\langle Gu_n(t),w(t)\rangle_{V^*\times V} \, dt\to  \int_0^T\langle Gu(t),w(t)\rangle_{V^*\times V}\, dt, \\
&&  \int_0^T\langle \theta_n'(t),\eta(t)\rangle_{H^1(\Gamma_C)^*\times H^1(\Gamma_C)} \, dt\to  \int_0^T\langle \theta(t),\eta(t)\rangle_{H^1(\Gamma_C)^*\times H^1(\Gamma_C)}\, dt,\\
&&  \int_0^T\kappa (\nabla_\Gamma \theta_n(s),\nabla_\Gamma \eta(t))_{L^2(\Gamma_C)^d} \, dt\to  \int_0^T \kappa(\nabla_\Gamma\theta(s),\nabla_\Gamma\eta(t) )_{L^2(\Gamma_C)^d}  \, dt,
\end{eqnarray*}
for every $w\in L^2(0,T;V)$, and every $\eta\in L^2(0,T;H^1(\Omega))$, as $n\to \infty$.
Next, we deal with the boundary integrals. To simplify the presentation, we omit the 
time dependence of the functions. We write
\begin{eqnarray}
\nonumber &&\int_0^T \int_{\Gamma_C} h_\tau(\bar{u}_n,\bar{v}_n,\bar{\theta}_n)\bar{\xi}_n w_\tau \, d\Gamma \, ds= \int_0^T \int_{\Gamma_C} h_\tau(\bar{u}_n,\bar{v}_n,\bar{\theta}_n)\bar{\xi}_n w_\tau \, d\Gamma\,  ds \\[2mm] 
&&\nonumber-\int_0^T \int_{\Gamma_C} h_\tau(\bar{u},\bar{v},\bar{\theta})\bar{\xi}_n w_\tau  \, d\Gamma\,  ds
 + \int_0^T \int_{\Gamma_C} h_\tau(\bar{u},\bar{v},\bar{\theta})\bar{\xi}_n w_\tau \, d\Gamma \, ds.
\end{eqnarray}
The weak convergence $\bar{\xi}_n \to \bar{\xi}$ in $L^2(0,T;L^2(\Gamma_C)^d)$ implies 
\begin{equation}
\int_0^T \int_{\Gamma_C} h_\tau(\bar{u},\bar{v},\bar{\theta})\bar{\xi}_n w_\tau\, d\Gamma ds \to  \int_0^T \int_{\Gamma_C} h_\tau(\bar{u},\bar{v},\bar{\theta})\bar{\xi} w_\tau d\Gamma ds.\label{h_tauconv1}
\end{equation}
Moreover, by the continuity of $h_\tau$, the strong convergences \eqref{conv1} and \eqref{conv2}
and the Lebesgue dominated convergence theorem, we find
 \begin{eqnarray}
 \nonumber&&\int_0^T \int_{\Gamma_C} h_\tau(\bar{u}_n,\bar{v}_n,\bar{\theta}_n)\bar{\xi}_n w_\tau  d\Gamma ds-\int_0^T \int_{\Gamma_C} h_\tau(\bar{u},\bar{v},\bar{\theta})\bar{\xi}_n w_\tau  d\Gamma ds \ \\[2mm]
&& \; \le\|\bar{\xi}_n\|_{L^2(0,T;L^2(\Gamma_C)^d)} \left(\int_0^T\int_{\Gamma_C} |w_\tau|^2 (h_\tau(\bar{u}_n,\bar{v}_n,\bar{\theta}_n)- h_\tau (\bar{u},\bar{v},\bar{\theta}))\, d\Gamma ds\right)^{1/2}, \label{h_tauconv2}
\end{eqnarray}
\noindent
where the last term converges to zero as $n\to \infty$. Hence, \eqref{h_tauconv1} 
and \eqref{h_tauconv2} yield that as $n\to \infty$,
\begin{equation}
\int_{\Gamma_C} h_\tau(\bar{u}_n,\bar{v}_n,\bar{\theta}_n)\bar{\xi}_n w_\tau \, d\Gamma \, ds \to \int_0^T \int_{\Gamma_C} h_\tau(\bar{u},\bar{v},\bar{\theta})\bar{\xi} w_\tau \,d\Gamma\,  ds.\nonumber 
\end{equation}

\noindent
By the direct application of the Lebesgue dominated convergence theorem and the
continuity of  $h_\nu$ and $h_w$, we find
\begin{eqnarray}
&&\int_0^T \int_{\Gamma_C} h_\nu(\bar{u}_n,)w_\nu \, d\Gamma \, ds \to \int_0^T \int_{\Gamma_C} h_\nu(\bar{u}) w_\nu d\Gamma ds,\nonumber \\[2mm]
&&\int_{\Gamma_C} h_w(\bar{u}_n,\bar{v}_n)\eta \, d\Gamma \, ds \to \int_0^T \int_{\Gamma_C} h_w(\bar{u},\bar{v})\eta \, d\Gamma \, ds, \nonumber 
\end{eqnarray}
\noindent
as $n\to \infty$. This completees proof of the lemma.
\end{proof}

The next step is essentially the last one.

\begin{lemma}
\label{fixed}
The operator $\Lambda$ has a fixed point.
\end{lemma}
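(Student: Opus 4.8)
The plan is to verify the hypotheses of the Kakutani--Ky Fan--Glicksberg fixed point theorem (Theorem~\ref{Kakutani}) for the operator $\Lambda$ restricted to a suitable set, and then conclude. First I would fix the ball $B=B(R_1,R_2,R_3)\subset Z$ from Lemma~\ref{balls}, and equip $Z$ with its weak topology. Since $\mathcal{W}$, $\mathcal{W}_\Gamma$ and $L^2(0,T;L^2(\Gamma_C)^d)$ are reflexive Banach spaces, their bounded, closed, convex subsets are weakly compact, so $B$ is a non-empty, convex, weakly compact subset of the locally convex Hausdorff space $(w\text{-}Z)$. By Lemma~\ref{balls} we have $\Lambda(B)\subset 2^B$, so we may regard $\Lambda$ as a set-valued map from $B$ into $2^B$.

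Next I would collect the remaining structural properties. Lemma~\ref{convex} gives that $\Lambda$ has non-empty, convex values. Lemma~\ref{graph} shows that $\mathrm{Gr}(\Lambda)$ is sequentially closed in $(w\text{-}Z)\times(w\text{-}Z)$; intersecting with $B\times B$, which is sequentially weakly closed, we get that $\mathrm{Gr}(\Lambda|_B)$ is sequentially closed in $(w\text{-}B)\times(w\text{-}B)$. The one subtlety is that Theorem~\ref{Kakutani} requires $\mathrm{Gr}(\varphi)$ to be \emph{closed}, not merely sequentially closed, in the product topology. This is resolved by noting that on the bounded set $B$ the weak topology is metrizable (since $\mathcal{W}$, $\mathcal{W}_\Gamma$ and $L^2(0,T;L^2(\Gamma_C)^d)$ are separable and reflexive, bounded sets in their weak topologies are metrizable), so on $B\times B$ sequential closedness and closedness coincide. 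Hence all the hypotheses of Theorem~\ref{Kakutani} are met with $S=B$, $E=w\text{-}Z$, $\varphi=\Lambda|_B$.

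Applying Theorem~\ref{Kakutani} then yields a non-empty (and weakly compact) set of fixed points of $\Lambda|_B$; pick any $(v,\theta,\xi)\in B$ with $(v,\theta,\xi)\in\Lambda(v,\theta,\xi)$. By the definition of $\Lambda$, this means precisely that $v$ solves Problem~\ref{P_aux1} with $\bar v=v$, $\bar\theta=\theta$, $\bar\xi=\xi$, that $\theta$ solves Problem~\ref{P_aux2} with $\bar u=u$ (the function built from $v$ via \eqref{hist}), $\bar v=v$, and that $\xi(t)\in S^2_{\partial j_\tau}(v_\tau(t))$ a.e. Reading off equations \eqref{aux_1} and \eqref{aux_2} with the barred functions replaced by the unbarred ones, together with $v(0)=v_0$, $\theta(0)=\theta_0$, shows that $(v,\xi,\theta)$ satisfies \eqref{est_p1}--\eqref{est_p3}, i.e. it is a solution of the truncated, coupled Problem~\ref{P_trunc}. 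This is exactly the assertion of the lemma.

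The main obstacle in this step is the topological one just indicated: reconciling the sequential closedness established in Lemma~\ref{graph} with the genuine closedness required by the fixed point theorem, which forces one to invoke metrizability of the weak topology on the bounded set $B$ (equivalently, to use a version of the Kakutani--Ky Fan--Glicksberg theorem phrased for sequentially closed graphs on weakly compact metrizable sets). Everything else is bookkeeping: the weak compactness of $B$ is standard reflexivity, and the non-emptiness and convexity of the values have already been dispatched in Lemmas~\ref{convex} and~\ref{balls}. A secondary point worth stating carefully is that $\Lambda$ indeed maps into $2^B$ and not just into $2^Z$, which is where Lemma~\ref{balls} (and hence the a priori estimates of Lemma~\ref{radius}) is essential — without the self-map property the fixed point theorem does not apply.
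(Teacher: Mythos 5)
Your proposal is correct and follows essentially the same route as the paper: restrict $\Lambda$ to the ball $B(R_1,R_2,R_3)$ from Lemma~\ref{balls}, combine Lemmas~\ref{convex} and~\ref{graph}, resolve the gap between sequential weak closedness and weak closedness of the graph, and apply Theorem~\ref{Kakutani}. Your explicit appeal to metrizability of the weak topology on bounded sets of these separable reflexive spaces is in fact a cleaner justification of the closedness step than the paper's somewhat garbled compactness argument.
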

\begin{proof}
Consider $\Lambda|_{B(R_1,R_2,R_3)}$, where $B(R_1,R_2,R_3)$ is given by Lemma~\ref{balls}. It follows 
from the lemma that $\Lambda(B(R_1,R_2,R_3))\subset 2^{B(R_1,R_2,R_3)}$. Then, Lemma \ref{convex} s
hows that this mapping has nonempty and convex values. From Lemma~\ref{graph} we deduce that 
$Gr(\Lambda|_{B(R_1,R_2,R_3)})$ is sequentially closed in the $(w-Z)\times (w-Z)$ topology. Since the 
topology is weak, we need the following argument to show that this set is closed.  
But, $Gr(\Lambda|_{B(R_1,R_2,R_3)})\subset B(R_1,R_2,R_3)\times B(R_1,R_2,R_3)$, which is bounded, 
 closed and convex in the reflexive space $Z\times Z$,therefore, $Gr(\Lambda|_{B(R_1,R_2,R_3)})$ is sequentially compact, and so $(w-Z)\times (w-Z)$ is compact and $(w-Z)\times (w-Z)$ is closed. Taking into account Lemma~\ref{convex}, the assertion of the lemma follows now directly from Theorem~\ref{Kakutani}.
\end{proof}

We have shown that all the assumptions of the fixed-point theorem, Theorem~\ref{Kakutani}, hold true
 and that establishes the following theorem, which guarantees the existence of a solution of the truncated problem. 

\begin{theorem}\label{th_trunc}
There exists a solution to Problem~\ref{P_trunc}.
 \end{theorem}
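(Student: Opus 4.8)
The plan is to deduce Theorem~\ref{th_trunc} directly from Lemma~\ref{fixed}, so that essentially all the work has already been done: what remains is to verify that a fixed point of the solution operator $\Lambda$ is literally a solution of Problem~\ref{P_trunc}. First I would invoke Lemma~\ref{fixed} to obtain a triple $(v,\theta,\xi)\in B(R_1,R_2,R_3)\subset Z$ with $(v,\theta,\xi)\in\Lambda(v,\theta,\xi)$. By the construction of $\Lambda$, feeding it the input data $(\bar v,\bar\theta,\bar\xi)=(v,\theta,\xi)$ and letting $\bar u$ be given by \eqref{hist} applied to $\bar v$ — so that $\bar u=u$, the function associated to $v$ via \eqref{hist} — the first component $v$ is the unique solution of Problem~\ref{P_aux1} (Lemma~\ref{exist_1}), the second component $\theta$ is the unique solution of Problem~\ref{P_aux2} (Lemma~\ref{exist_2}), and $\xi$ is an $L^2$-measurable selection of $\partial j_\tau(\bar v_\tau)$.

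Next I would substitute $\bar v=v$, $\bar\theta=\theta$, $\bar\xi=\xi$, $\bar u=u$ into \eqref{aux_1} and \eqref{aux_2}. Equation \eqref{aux_1} then reads exactly as \eqref{est_p1}, equation \eqref{aux_2} reads exactly as \eqref{est_p3}, the selection property becomes \eqref{est_p2} (here using $\bar v_\tau=v_\tau$), and the initial conditions $v(0)=v_0$, $\theta(0)=\theta_0$ are inherited verbatim from Problems~\ref{P_aux1} and~\ref{P_aux2}. The regularity $v\in\mathcal{W}$, $\theta\in\mathcal{W}_\Gamma$, $\xi\in L^2(0,T;L^2(\Gamma_C)^d)$ holds since $(v,\theta,\xi)\in Z$, and $u\in L^2(0,T;V)$ follows from \eqref{hist} because $u_0\in V$ and $v\in L^2(0,T;V)$. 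Hence $(v,\theta,\xi)$ (together with the associated $u$) solves Problem~\ref{P_trunc}.

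Since every nontrivial ingredient — the a~priori bounds of Lemma~\ref{radius}, the self-map property of Lemma~\ref{balls}, the convexity of the values in Lemma~\ref{convex}, the weak sequential closedness of the graph in Lemma~\ref{graph}, and the verification of the hypotheses of the Kakutani--Ky Fan--Glicksberg theorem culminating in Lemma~\ref{fixed} — has already been established, there is no substantive obstacle remaining. The only point that requires a line of care is the bookkeeping that unwinds the fixed-point identity $(v,\theta,\xi)\in\Lambda(v,\theta,\xi)$ into the three coupled relations of Problem~\ref{P_trunc}: one must note that the $\bar u$ built from $\bar v=v$ via \eqref{hist} coincides with the $u$ built from $v$, so that the truncated boundary terms $h_\nu(N_l\bar u)$, $h_\tau(N_l\bar u,N_l\bar v,M_l\bar\theta)\bar\xi$ and the source $h_w(N_l\bar u,N_l\bar v)$ in \eqref{aux_1} and \eqref{aux_2} are exactly those appearing in \eqref{est_p1} and \eqref{est_p3}.
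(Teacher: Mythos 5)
Your proposal is correct and follows exactly the route the paper takes: the paper treats Theorem~\ref{th_trunc} as an immediate consequence of Lemmas~\ref{radius}--\ref{fixed}, with the (unstated there, but correctly spelled out by you) bookkeeping that a fixed point $(v,\theta,\xi)\in\Lambda(v,\theta,\xi)$ turns \eqref{aux_1}--\eqref{aux_2} and the selection condition into \eqref{est_p1}--\eqref{est_p3}. Your explicit remark that $\bar u$ built from $\bar v=v$ via \eqref{hist} coincides with $u$, and that $\bar v_\tau=v_\tau$ in the selection condition, is exactly the verification the paper leaves implicit.
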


The last step in the proof of our main theorem is to show that we can remove the truncation operators from Problem~\ref{P_trunc}.

\begin{proof}[Proof of Theorem~\ref{mainth}]  We need to obtain the relevant estimates on a 
solution $v,\theta$ of Problem~\ref{P_trunc} that are independent of the truncation parameter $l$. 
To that end, we choose $\eta=v(t)$ in (\ref{est_p1}) and using again the Cauchy inequality with $\varepsilon>0$,
we obtain
\begin{eqnarray}
\nonumber&& \frac{1}{2}\frac{d}{dt} \|v(t)\|_H^2 + \alpha \|v(t)\|_V^2 + \frac{1}{2}\frac{d}{dt} \langle Gu(t),u(t)\rangle_{V^*\times V}  \\
\nonumber&& \qquad \qquad  + \int_{\Gamma_C} h_\nu(N_l u(t))v_\nu(t)\, d\Gamma  +\int_{\Gamma_C} h_\tau(N_lu(t),N_lv(t),M_l\theta(t))\xi(t)v_\tau(t)\, d\Gamma \\ 
&& \qquad \le C(\varepsilon)\|f(t)\|_{V^*}^2 + \varepsilon \|v(t)\|_V^2, \label{est_7}
\end{eqnarray}
for $t\in (0,T)$. The hypotheses (H5)-(H7)  and an appropriate choice of $\varepsilon>0$
in (\ref{est_7}) yields
\begin{eqnarray}
\nonumber&&  \frac{d}{dt} \|v(t)\|_H^2 + \alpha\|v(t)\|_V^2 + \frac{d}{dt} \langle Gu(t),u(t)\rangle_{V^*\times V }   \\
&& \qquad \le C\Big(1 + \|f(t)\|^2_{V^*}+ \int_{\Gamma_C} |v(t)|\, d\Gamma+ \int_{\Gamma_C} |u(t)||v(t)|\, d\Gamma)\label{est_8}
\end{eqnarray}
\noindent
for $t\in (0,T)$. 
Straightforward manipulations that use the Cauchy inequality with $\varepsilon$ again, 
the fact that $u(t) = u_0 + \int_0^t v(t)\, dt$ and the inequality in {\cite{MIGDENKBOOK}, Lemma 8.4.12}
show that
\[
\|v\|_{L^2(\Gamma_C)^2}^2 \leq \varepsilon \|v\|_{V}^2 + C(\varepsilon)\|v\|_H^2, \label{erhling}
\]
which leads to 
\begin{eqnarray}
	&& \nonumber \frac{d}{dt} \left(  \|v(t)\|_H^2 + \langle Gu(t),u(t)\rangle_{V^*\times V}  \right) + \alpha \|v(t)\|_V^2 \\[2mm]
	&& \qquad \le C\Big(1  + \|f(t)\|^2_{V^*} + \|u_0\|_V^2 + \int_0^t \|v(s)\|_H^2\ ds\Big). \label{est_on_v}
\end{eqnarray}
Integrating \eqref{est_on_v} over$(0,t),~ t \in (0,T)$ and using (H1) and (H3), we get

\begin{equation}
\|v(t)\|_H^2 +   \|v\|_{L^2(0,t;V)}^2  \le C \left(1+ \|u_0\|_V^2 + \|v_0\|_H^2 + 
\int_0^t \|v(s)\|_{H}^2 \, ds\right).\label{est_v_int}
\end{equation}
By the Gronwall inequality applied to $\|v(t)\|_H^2$ we find 
\begin{equation} \|v(t)\|_H^2\le C, \label{est_12a}
\end{equation}
for $t\in (0,T)$. Applying \eqref{est_12a} to \eqref{est_v_int} we obtain
\begin{equation}
\|v\|_{L^2(0,T;V)}^2  \le C. \label{est_13a}
\end{equation}
Choosing $\eta=\theta(t)$ in (\ref{est_p3}), and applying (H4) for $t\in (0,T)$, leads to the estimate
\begin{eqnarray}
\nonumber&& \frac{d}{dt}\|\theta(t)\|_{L^2(\Gamma_C)}^2 + \kappa \|\nabla_\Gamma \theta(t)\|_{L^2(\Gamma_C)^d}^2  \le 
C\Big(1+\int_{\Gamma_C} |u(t)|^2|\theta(t)| \, d\Gamma \\[2mm] 
&& \qquad \qquad +\int_{\Gamma_C} |v(t)|^2 |\theta(t)| \, d\Gamma  + \int_{\Gamma_C} |\theta(t)|\, d\Gamma\Big).\label{est9a}
\end{eqnarray}
Next, by using the continuous embedding $H^1(\Omega) \to L^4(\partial\Omega)$, we find
\begin{eqnarray}
&& \nonumber \frac{d}{dt}\|\theta(t)\|_{L^2(\Gamma_C)}^2 + \|\nabla_\Gamma \theta(t)\|_{L^2(\Gamma_C)^d}^2\le C \Big(1+ \|u(t)\|_V^2\|\theta(t)\|_{L^2(\Gamma_C)} \\[2mm] 
&&
\quad \quad + \|v(t)\|_V^2\|\theta(t)\|_{L^2(\Gamma_C)} + \|\theta(t)\|_{L^2(\Gamma_C)}^2\Big). \label{est10a}
\end{eqnarray}
Integrating \eqref{est10a} over $(0,t), ~t\in (0,T)$, we find
\begin{eqnarray}
&& \|\theta(t)\|_{L^2(\Gamma_C)}^2 + \|\nabla \theta(t)\|_{L^2(0,t;{L^2(\Gamma_C)}^d)}^2 \le C \Big(1+ \int_0^t \|u(s)\|_V^2\|\theta(s)\|_{L^2(\Gamma_C)}\, ds \\[2mm] 
&&
\quad \quad + \int_0^t \|v(s)\|_V^2\|\theta(s)\|_{L^2(\Gamma_C)}\, ds + \int_0^t \|\theta(s)\|_{L^2(\Gamma_C)}^2 \, ds \Big). \label{est11a}
\end{eqnarray}
Using a nonlinear version of the Gronwall inequality (\cite[p.360]{MITRIN}), we conclude that
\begin{equation}
\|\theta(t)\|_{L^2(\Gamma_C)}^2 \le C, \label{est_theta}
\end{equation}
for $t\in (0,T)$ and, consequently, applying \eqref{est_theta} to \eqref{est11a} we have 
\begin{equation}
\| \theta\|_{L^2(0,T;{H^1(\Gamma_C)})}^2 \le C. \label{est_nabla}
\end{equation}
The previous estimates imply the bound
\begin{equation}
\|v'\|_{L^2(0,T;V^*)}^2+ \|\theta'\|_{L^2(0,T;H^1(\Omega)^*)}^2 \le C, \label{est_derivatives}
\end{equation}
and so we conclude from \eqref{est_12a}, \eqref{est_13a} and \eqref{est_theta}--\eqref{est_derivatives} that
\begin{equation}
\|v\|_{\mathcal{W}}^2 + \|\theta\|_{\mathcal{W}_\Gamma}^2 \le C, \label{est_final}
\end{equation}
where $C$ is independent of $l$.  This estimate is crucial for the proof of the theorem.
\medskip

Now, let $(v_n, \theta_n, \xi_n)$ be a solution of Problem \ref{P_trunc} with the truncation 
constant $l=n$. Then, \eqref{est_final} and the Aubin--Lions lemma imply that there is a subsequences 
such that $v_n\to v$ strongly in $L^2(0,T;H)$ and in $L^2(0,T;L^2(\Gamma_C)^d)$, and 
$\theta_n\to \theta$ strongly in $L^2(0,T;L^2(\Gamma_C))$. Passing to the limit with the multivalued 
term follows exactly as in the proof of Lemma \ref{graph}, so, we pass to the limit with all terms in Problem~\ref{P_trunc}. To finally remove the truncations, we need to check that
\begin{align*}
 h_\nu(N_n(u_n))\, &\to  h_\nu(u), \\[2mm]
 h_\tau(N_n(u_n),N_n(v_n),M_n(\theta_n))\, &\to h_\tau(u,v,\theta), \\[2mm]
h_w(N_n(u_n),N_n(v_n))\, & \to  h_w(u,v),
\end{align*}
strongly in $L^2(0,T;L^2(\Gamma_C))$. By continuity of $h_\nu, h_\tau,h_w$ it is enough to show that 
\begin{eqnarray}
&& N_n(v_n)\to v \quad \textrm{strongly in} \ L^2(0,T;L^2(\Gamma_C)), \label{trunc_1}\\[2mm]
&& N_n(u_n)\to u \quad \textrm{strongly in} \ L^2(0,T;L^2(\Gamma_C)), \label{trunc_2}\\[2mm]
&& M_n(\theta_n)\to \theta \quad \textrm{strongly in} \ L^2(0,T;L^2(\Gamma_C)).\label{trunc_3} 
\end{eqnarray}
Since by the Aubin-Lions lemma $v_n\to v$ strongly in $L^2(0,T;H^{1-\delta}(\Omega)^d)$, by continuity of the trace we have 
\begin{equation}
 v_n\to v \  \mbox{strongly in} \  {L^2((0,T)\times \Gamma_C)}. \label{convonboundary}
\end{equation}
From \eqref{convonboundary}, Lemma~\ref{Liptrunc} and the Lebesgue dominated convergence theorem, we obtain
\begin{eqnarray*}
	&& \|N_n(v_n)-v\|_{L^2((0,T)\times \Gamma_C)}^2 \le 2\|N_n(v_n)-N_n(v)\|_{L^2((0,T)\times \Gamma_C)}^2 + 2\|N_n(v)-v\|_{L^2((0,T)\times \Gamma_C)}^2  \\[2mm]
	&& \qquad \le 2\|v_n-v\|_{L^2((0,T)\times \Gamma_C)}^2  + 2\|N_n(v)-v\|_{L^2((0,T)\times \Gamma_C)}^2  \to 0 \ \textrm{as} \ n\to \infty.
\end{eqnarray*}
This proves \eqref{trunc_1}. To show \eqref{trunc_2} and \eqref{trunc_3}, we repeat similar calculations for $u_n$ and $\theta_n$. 
Hence, we can pass to the limit with truncation parameter $l=n\to \infty$ in all terms. This completes the proof of Theorem~\ref{mainth}.
\end{proof}

Thus, the model has at least one solution. The question of uniqueness remains unresolved, but in view of the
complexity of the system and its nonlinearities, it is unlikely. Indeed, the uniqueness of solution to Problem~\ref{P_V} 
does not follow from the presented argument, as it does in a case of the Banach fixed point theorem. Moreover, we suspect that uniqueness would require additional smallness assumptions on the data and stronger assumption on the functions $h_\tau,h_\nu,h_w$. 

As has been already mentioned, establishing an existence theorem for a purely elastic model is of considerable 
mathematical interest.

\end{document}